\documentclass[11pt]{article}
\usepackage{authblk}
\usepackage[margin=1in]{geometry}

\usepackage{graphicx,color,amsmath,amsfonts,enumerate,amsthm,amssymb,mathtools,enumitem,thmtools,hyperref,mathdots,centernot,bm,soul,bbm,stmaryrd}
\usepackage[capitalise, noabbrev]{cleveref}
\usepackage{placeins}
\usepackage{floatrow}
\usepackage{array}
\usepackage[caption=false]{subfig}

\usepackage{multirow}

\usepackage{tikz,pgfplots}
\usetikzlibrary{quantikz2}

\pgfplotsset{compat=1.15}
\hypersetup{colorlinks=true,linkcolor=blue,citecolor=blue,urlcolor=blue}

\usepackage{algorithm,algpseudocode}
\usepackage{algorithmicx}

\theoremstyle{plain}
\newtheorem{theorem}{Theorem}

\newtheorem{lemma}[theorem]{Lemma}

\newtheorem{problem}[theorem]{Problem}

\theoremstyle{definition}
\newtheorem{definition}[theorem]{Definition}

\usepackage{complexity}

\newcommand{\bbn}{\mathbb{N}}
\newcommand{\bbr}{\mathbb{R}}
\newcommand{\bbz}{\mathbb{Z}}

\newcommand{\commentblock}[1]{}
\newcommand{\abs}[1]{\left| {#1} \right|}

\newcommand{\tr}[1]{\mathrm{tr}\left( #1 \right)}

\newcommand{\diag}[1]{\mathrm{diag}\left( #1 \right)}

\newcommand{\set}[1]{\left\{#1\right\}}
\newcommand{\bitstring}[1]{\left\{0,1\right\}^{#1}}
\newcommand{\order}[1]{\mathcal O\left(#1\right)}

\newcommand{\orderpoly}[1]{O^{*}\left(#1\right)}

\newcommand{\twonorm}[1]{||#1||_{2}}

\newcommand{\ketbra}[2]{\ensuremath{\left|#1\right\rangle\!\!\left\langle#2\right|}}

\newcommand{\clif}[1]{\mathcal{C}_{#1}}
\newcommand{\stab}[1]{\mathcal{S}_{#1}}

\newcommand{\phasegate}{\sqrt{Z}}

\definecolor{hpCol}{RGB}{150, 0, 150}

\definecolor{darkcyan}{RGB}{0,139,139}

\newcommand{\ml}[2]{\twonorm{#1 \ket{#2}}^2}
\DeclareMathOperator{\rank}{\mathrm{rank}}
\newcommand{\sall}{\mathcal{S}_{\mathrm{all}}}
\newcommand{\sscase}{S}
\newcommand{\pset}[1]{\mathcal{S}_{(#1)}}
\newcommand{\projset}[2]{\Pi_{({#1}),{#2}}}
\newcommand{\reps}{\mathrm{Reps}}

\newcommand{\probnum}[1]{\textsc{Opt-NSPS}(#1)}
\newcommand{\probdec}[1]{\textsc{NSPS}(#1)}
\newcommand{\probproj}[1]{\textsc{Srch-NSPS}(#1)}
\newcommand{\reducesto}{\leq^{\poly}_{\mathrm{T}}}

\newcommand{\orb}[1]{\mathrm{Orb}(#1)}
\newcommand{\eqloc}{\sim_{loc}}
\newcommand{\notset}[1]{\neg ~#1}

\newcommand{\pauliX}{X}
\newcommand{\pauliY}{Y}
\newcommand{\pauliZ}{Z}

\newcommand{\maxlin}{\textsc{Max-E3-Lin}_2}
\crefname{appendix}{Appendix}{Appendices}
\Crefname{appendix}{Appendix}{Appendices}

\DeclareMathOperator{\occ}{\mathrm{occ}}

\begin{document}

\title{Complexity and Applications of Nearest Stabilizer Product State Problems}
\date{\today}

\author{Daniel Grier}
\affil{Department of Mathematics and Department of Computer Science and Engineering, University of California, San Diego}

\author{Hakop Pashayan}
\affil{Hon Hai (Foxconn) Research Institute, Taipei 114699, Taiwan}

\author{Luke Schaeffer}
\affil{Institute for Quantum Computing, University of Waterloo}
\maketitle

\begin{abstract}

	Consider the following optimization problem over stabilizer product states: given an $n$-qubit stabilizer state $\left|\psi\right\rangle$ and a set of single-qubit stabilizer states $S$, maximize $\left|\langle \psi | \phi_1, \ldots, \phi_n\rangle\right|^2$ over single-qubit stabilizer states $\left|\phi_i\right\rangle \in S$. By varying the set $S$, we show that solutions to this problem can be useful in a variety of settings: tighter runtime bounds for certain classical simulation algorithms; measures of entanglement; and the complexity of low-rank matrix completion. Moreover, we give a complete complexity classification of this nearest stabilizer product state problem. After accounting for the symmetries in the Clifford group, there are $9$ distinct possible sets $S$, and we show that all but the two simplest of these are $\NP$-complete.

\end{abstract}

\section{Introduction}\label{sec:intro}

Stabilizer states are ubiquitous in quantum computing because they exhibit many important quantum properties such as superposition, entanglement and contextuality whilst also admitting compact classical representations that are computationally easy to manipulate. The Gottesman--Knill theorem \cite{gottesman1998heisenberg, aaronson2004improved} provides an efficient algorithm for representing stabilizer states, updating these under the action of Clifford unitaries and computing the modulus of inner products between pairs of $n$-qubit stabilizer states. One can also efficiently compute partial traces of stabilizer states, representing and updating these potentially mixed states under the action of Clifford unitaries. However, despite the tractability of many computations involving stabilizer states, they are incredibly expressive. This allows them to underpin large classes of modern quantum error-correcting codes, quantum computation models such as measurement-based quantum computation, the study of contextuality and entanglement.
This work studies a class of problems that originally drew our attention due to their connection to modern techniques for classically simulating quantum circuits. However, these problems also revealed connections to the study of multipartite entanglement and mathematical literature on rank minimization problems.

\subsection{Problem Statement and Summary of Results}
In this work, we ask the question, ``How close is a given stabilizer state to a set of stabilizer product states?''. Given an $n$-qubit state $\ket{\psi}$ and a set of projectors $\Pi$, we consider the quantity:
\begin{align}
	m(\psi, \Pi):= \max_{\pi \in \Pi} |\langle \psi | \pi |\psi \rangle| = \underset{\pi \in \Pi}{\max} ~\ml{\pi}{\psi}.
\end{align}
In this work, we consider a special case of this problem focused on stabilizer states. In particular, we focus on $n$-qubit stabilizer states $\ket\psi\in \stab{n}$ and sets $\Pi$ that are a subset of projections onto a tensor product of single-qubit stabilizer states. To this end, we define
\begin{align*}
	\sall : & = \left\{ \frac{I \pm X}{2}, \frac{I \pm Y}{2}, \frac{I \pm Z}{2} \right\}                      \\
	        & =\set{\ketbra{+}{+},\ketbra{-}{-},\ketbra{+i}{+i},\ketbra{-i}{-i}, \ketbra{0}{0},\ketbra{1}{1}}
\end{align*}
to be the set of single-qubit stabilizer projectors. For our problem, this means that $\Pi=\sscase^{\otimes n}$ for some $\sscase \subseteq \sall$ or more generally, $\Pi=\Pi_{\sscase,k}:=\sscase^{\otimes k} \otimes I^{\otimes n-k}$ for some $k\in [n]$.

\begin{definition}[Nearest Stabilizer Product State (NSPS)]

	\label{defn:probnum}
	For $\sscase\subseteq \sall$, $\sscase\neq \emptyset$,
	we define the optimization problem $\probnum{\stab{n},\sscase}$ as follows:

	Given any state $\ket{\psi}\in \stab{n}$ and $k\in [n]$, compute the quantity:
	\begin{align}\label{eq:defin_m}
		m(\psi,\Pi_{\sscase,k}):=\underset{\pi \in \Pi_{\sscase,k}}{\max} \|\pi \ket{\psi} \|_2^2.
	\end{align}
\end{definition}

Based on well-known properties of stabilizer states, it is easy to show that the quantity $m(\psi, \Pi_{\sscase,k})$ is either zero or $2^{-p}$ for some $0 \leq p \leq k$. Hence, approximations $\hat{m}$ to precision $\frac{1}{\sqrt{2}}m < \hat{m} < \sqrt{2}m$ can be used to recover the exact value.
We use $\probdec{\stab{n}, \sscase}$ to denote the associated decision problem\footnote{Where, in addition to the inputs $\ket{\psi}$ and $k$ we are given $p \in \bbn$ and our task is to decide if $\textsc{Yes}: m(\psi,\Pi_{\sscase,k})\geq 2^{-p+1}$ or if $\textsc{No}: m(\psi,\Pi_{\sscase,k})\leq 2^{-p}$.}. Additionally, we use $\probproj{\stab{n}, \sscase}$ to denote the associated search problem where the task is to return any projector $\pi \in \Pi_{\sscase,k}$ that achieves the optimal value  $m(\psi,\Pi_{\sscase,k})$.  In \cref{app:FormalDefin}, we show that the search, optimization and decision variants of the problems are polynomial-time equivalent in complexity.

For all non-empty subsets $\sscase \subseteq \sall$, $\probdec{\stab{n},\sscase}$ is in $\NP$. To see this observe that in the case of a $\textsc{Yes}$ instance (i.e. $m(\psi,\Pi_{\sscase,k})\geq 2^{-p+1}$), there exists a short witness $\pi\in \Pi_{\sscase,k}$ such that $\|\pi \ket{\psi} \|_2^2\geq 2^{-p+1}$. Further, given $\pi$, a deterministic Turing machine can verify in polynomial time that indeed $\|\pi \ket{\psi} \|_2^2\geq 2^{-p+1}$ \cite{gottesman1998heisenberg, aaronson2004improved}.

There are $63$ non-empty subsets $\sscase \subseteq \sall$ that give rise to different problems $\probdec{\stab{n}, \sscase}$. We use local Clifford symmetries to classify these into $9$ representatives labeled by $S=\pset{abc}$ for some $a\geq b \geq c\in \set{0,1,2}$ (see \cref{eq:S_representatives} for a list of these). This notation indicates that of the single qubit Paulis $\set{X,Y,Z}$, $\sscase$ is constructed by including exactly $a$ eigenprojectors of one, $b$  of another and $c$ of the remaining Pauli. So $\pset{abc}$ has cardinality $a+b+c$ and $\pset{222}=\sall$. We characterize the complexity of these problems and summarize these results in \cref{tbl:complexity_results}. Henceforth, we use the more concise notation $\probdec{abc}\equiv \probdec{\stab{n},\pset{abc}}$, $\Pi_{(abc),k}\equiv \Pi_{\pset{abc},k}$ and $m(\psi,\sscase) \equiv m(\psi,\Pi_{\sscase,n})$.

\begin{table}[H]
	\centering
	\caption{The decision problem $\probdec{\stab{n},\sscase}$ is classified into nine classes. Complexity results associated with each class are summarized.}\label{tbl:complexity_results}
	\setlength{\extrarowheight}{3pt}
	\begin{tabular}{|c|c|l|c|}
		\hline
		\textbf{Decision problem class} & \textbf{Complexity} & \textbf{Comment}                             & \textbf{Reference}                 \\ \hline
		$\probdec{{100}}$               & \cP                 & by Gottesman--Knill theorem                  & --                                 \\ \hline
		$\probdec{{200}}$               & \cP                 & greedy algorithm                             & \cref{lem:200complexity}           \\ \hline
		$\probdec{{110}}$               & \NP-complete        & by reduction from $\maxlin$                  & \cref{thm:partial-bases-hardness}  \\ \hline
		$\probdec{{210}}$               & \NP-complete        & by reduction from $\maxlin$                  & \cref{thm:partial-bases-hardness}  \\ \hline
		$\probdec{{111}}$               & \NP-complete        & by reduction from $\maxlin$                  & \cref{thm:partial-bases-hardness}  \\ \hline
		$\probdec{{220}}$               & \NP-complete        & by reduction from $\textsc{Independent-Set}$ & \cref{thm:hardness_of_220_and_222} \\ \hline
		$\probdec{{211}}$               & \NP-complete        & by reduction from $\maxlin$                  & \cref{thm:partial-bases-hardness}  \\ \hline
		$\probdec{{221}}$               & \NP-complete        & by reduction from $\maxlin$                  & \cref{thm:partial-bases-hardness}  \\ \hline
		$\probdec{{222}}$               & \NP-complete        & by reduction from $\textsc{Independent-Set}$ & \cref{thm:hardness_of_220_and_222} \\ \hline
	\end{tabular}
\end{table}

Our $\NP$-hardness proofs for $\probdec{220}$ and $\probdec{222}$ also prove a constructive result that enables solving particular instances of $\probdec{220}$ and $\probdec{222}$ in mildly growing exponential time. To prove the NP-hardness of $\probdec{220}$ and $\probdec{222}$, we use a specific graph construction that maps an input graph $G$ into a new graph $H=H(G)$ (also known as the 3-subdivision of $G$ \cite{ghanbari2023securedominationnumberksubdivision}) where each edge in $G=(V,E)$ is divided into 3 edges by placing two new vertices on the edge. Using this construction, we show that for any graph $G=(V,E)$, the nearest stabilizer product states to the graph state $\ket{H(G)}$ can be computed directly from the vertex cover number $\tau(G)$ of $G$:
\begin{align}\label{eq:222toMIS}
	-\log_2 m(H(G),\pset{220})=-\log_2 m(H(G),\pset{222})=\tau(G)+\abs{E}
\end{align}
Consequently, for any graph $G$, the computation of $m(H(G),\pset{220})$ and $m(H(G),\pset{222})$ is possible in polynomial space and mildly exponential runtime. For example, for a graph $G$ with $v=\abs{V}$ vertices and $e=\abs{E}$ edges, $H(G)$ has $v+2e$ vertices so a naive brute-force approach to computing $m(H(G),\pset{222})$ would require $\orderpoly{3^{v+2e}}$ runtime where $\orderpoly{\cdot}$ suppresses polynomial factors. In contrast, by using \cref{eq:222toMIS} and employing existing algorithms for computing $\tau(G)$ \cite{Xiao_2017} we find an $\orderpoly{1.1996^{v}}$ algorithm for computing the same quantity. For $\probdec{222}$, this method can be extended beyond graphs in the image of $H$ by using the fact that $m(H,\pset{222})$ is invariant under local complementation of $H$.

\subsection{Motivation and Relation to Other Work}

Several of these NSPS problems pose natural questions and/or connect to other areas of research.

\emph{Additive error Born rule probability estimation and $\probdec{110}$:}
A prominent class of modern classical simulation algorithms \cite{bravyi2016improved, bravyi2019simulation, seddon2020quantifying, pashayan2022fast} use stabilizer state decompositions of so-called magic states to classically simulate quantum circuits, $U$, dominated by Clifford gates and a limited number of non-Clifford $T$-gates ($T\equiv \mathrm{diag}(1, e^{i\pi/4})$). The only source of super-polynomial scaling in the runtime of these simulators comes from a linear dependence on the so-called \emph{stabilizer extent}, a measure of non-stabilizerness that depends on $t$, the number of $T$-gates. One such simulator is the in-class state of the art for estimating (to $\pm \epsilon$ precision) the Born rule probabilities
associated with measuring $U \ket{0^n}$ in the computational basis  \cite{pashayan2022fast}.
In \cref{app:additive_CSim}, we briefly describe how $\probnum{110}$ can be used to improve the bounds on the runtime of this algorithm.\footnote{The details will be presented in an upcoming work, where we also present efficient methods for upper bounding $\probnum{110}$.} In essence, the task of Born rule probability estimation can be reduced to the estimation of an inner product between a $t$-qubit, circuit-dependent, stabilizer state $\ket{\psi}$ and the $t$-fold tensor product of so-called $T$-states ($\ket{T}\propto \ket{0}+e^{i\pi/4}\ket{1}$), which we denote by $\ket{T^t}$. By decomposing each $T$-state into a linear combination of $\ket{+}$ and $\ket{+i}$,
one can estimate the inner product $\braket{T^t}{\psi}$ by using Monte-Carlo sampling of stabilizer product states $\ket{\phi}\in \set{\ket{+},\ket{+i}}^{\otimes t}$. The convergence rate of the estimators, and hence the algorithm's runtime, depends on the largest overlap $\abs{\braket{\phi}{\psi}}$ which is determined by $\probnum{110}$.
Finally, we note that decomposing each $T$-state into a linear combination of $\ket{+}$ and $\ket{+i}$ is not arbitrary but optimal with respect to an operationally relevant metric, achieving the stabilizer extent.

\emph{From $\probdec{200}$, $\probdec{220}$ and $\probdec{222}$ to rank minimization:}
$\probnum{200}$ asks,
``If an input stabilizer state is measured in the computational basis, what is the probability of the most likely outcome?''
The expert reader may be aware that this is determined by the rank of the $X$-check matrix of the input stabilizer state (see \cref{lem:200_UB}). In particular, the probability is maximized when the rank is minimized.
This is the starting point for our results on $\probdec{220}$ and $\probdec{222}$. $\probproj{220}$ addresses the question ``How do we apply Hadamard gates to a subset of the qubits so as to maximize the probability of the most likely outcome?''
Since Clifford operations like Hadamard modify the $X$-check matrix, the goal is to find those operations which minimize its rank. As it turns out, this setup falls under the umbrella of rank minimization problems. There, the input is a matrix whose entries depend on polynomially many variables, and the goal is to find an assignment to the variables that minimizes the rank. As we discuss in \cref{app:LR_Matrix_Completion}, our results show that rank minimization over $\mathbb F_2$ remains $\NP$-hard for classes of instances that, to our knowledge, do not yet appear in the literature---namely, those instances with exactly one variable per column, which may appear in several entries of that column but in no other column.

\emph{$\probdec{222}$ and connections to the geometric and Schmidt measures of entanglement:}\\
$\probnum{222}$ is concerned with finding the distance (fidelity) to the nearest stabilizer product state. This quantity is related to measures of entanglement on graph states $\ket{G}$ such as the geometric measure $E_g(\ket{G})$ \cite{wei2003geometric, chen2014comparison} and the Schmidt measure $E_s(\ket{G})$ \cite{eisert2001schmidt}. The geometric measure quantifies the distance to the nearest product state and is defined as $E_g(\ket{\psi})=-\log_2 \max_{\ket{\phi}} \abs{\braket{\phi}{\psi}}^2$ where the maximization is taken over all product states $\ket{\phi}$. The canonical polyadic (CP) rank or tensor rank, $R_{\min}$, of an $n$-qubit state $\ket{\psi}$ quantifies the minimum number of terms $R_{\min}$ needed to decompose $\ket{\psi}$ into a linear combination of product states. The Schmidt measure is defined as $E_s(\ket{\psi}):=\log_2 R_{\min}$ \cite{eisert2001schmidt}. The quantities $E_{sg}(\ket{G})$ and $E_{ss}(\ket{G})$ associated with this work can be defined as the \emph{stabilizer} analogues of $E_{g}(\ket{G})$ and $E_{s}(\ket{G})$ respectively. That is, they are optimized over all stabilizer product states as opposed to all product states. Thus, for example $E_{sg}(\ket{\psi}):=-\log_2 m(\psi, \pset{222})$.
We relate these to previously derived results \cite{markham2007entanglement} that upper bound the geometric and Schmidt measure of entanglement on graph states $\ket{G}$ by the vertex cover number $\tau(G)$ of the graph:
\begin{align*}
	\tau(G)\geq & E_{sg}(\ket{G}) \geq E_g(\ket{G})                       \\
	            & E_{sg}(\ket{G}) \geq E_{ss}(\ket{G}) \geq E_s(\ket{G}).
\end{align*}

We prove $E_{sg}(\ket{G}) \geq E_{ss}(\ket{G})$ by showing that a solution to a $\ket{\psi}$ instance of $\probproj{222}$ immediately provides an orthogonal stabilizer product state decomposition of $\ket{\psi}$ with exactly $1/m({\psi}, \pset{222})$ terms. By using \cref{eq:222toMIS} and employing algorithmic tools from Ref.~\cite{Xiao_2017} our results provide a computational tool for bounding $E_g (\ket{\psi})$ and $E_{ss} (\ket{\psi})$ for a restricted set of stabilizer states.
More details can be found in \cref{app:Entanglement_and_mult_CSim}.

\section{Technical Background}

We briefly summarize some background needed for our discussion.

\emph{Graphs:} A graph $G=(V,E)$ with vertex set $V$ and edge set $E$ admits an independence number $\alpha(G)$, a vertex cover number $\tau(G)$ and matching number $\nu(G)$.
An \emph{independent set} of $G$ is a subset of vertices where no two vertices are connected by an edge in $E$. The \emph{maximum independent set} is an independent set with the largest possible number of vertices. The \emph{independence number} is the cardinality of the maximum independent set.
A \emph{vertex cover} of $G$ is a subset of vertices such that all edges in $E$ have at least one incident vertex in the subset. The \emph{minimum vertex cover} is the smallest subset of vertices that is a vertex cover. The \emph{vertex cover number} is its cardinality.
A \emph{matching} is a subset of edges where no two edges share a common endpoint (i.e., no two edges are incident to the same vertex). A \emph{maximum matching} is a matching such that there is no other matching with a larger number of edges. The total number of edges in a maximum matching is called the \emph{matching number} of the graph.
A well-known property of graphs states that the complement (in $V$) of the maximum independent set of $G=(V,E)$ is a minimum vertex cover of $G$ (sometimes referred to as Gallai's lemma). Hence, $\alpha(G)+\tau(G)=\abs{V}$.
The \emph{local complement} (LC) is a map from a graph $G=(V,E)$ and a vertex $v \in V$ to a graph $G^* v=G'=(V,E')$ that modifies $G$ by toggling the edges with both endpoints in the neighborhood $N(v)$ of $v$, i.e., if at least one endpoint of $e\in E$ is not in $N(v)$ then $e\in E'$ but for all distinct pairs $i,j \in N(v)$, $(i,j)\in E'$ if and only if $(i,j)\notin E$.

\emph{Graph States:} Graph states are a subset of stabilizer states that can be constructed as follows. Given a graph $G=(V,E)$, we prepare the graph state $\ket{G}$ by first preparing $\ket{+}^{\otimes \abs{V}}$ then for each $(i,j)\in E$, we act with a controlled-Z gate between qubits $i$ and $j$. A natural choice of stabilizer generators for $\ket{G}$ is one where the $j^{\mathrm{th}}$ generator $g_j$ is associated with the $j^{\mathrm{th}}$ vertex of $G$ such that the $j^{\mathrm{th}}$ Pauli of $g_j$ is a Pauli-$X$ operator and, if $(i,j)\in E$ then the $i^{\mathrm{th}}$ Pauli of $g_j$ is a Pauli-$Z$ operator and otherwise it is the identity operator. This results in an $X$-check matrix that is the $n \times n$ identity matrix and a $Z$-check matrix that is the $n \times n$ adjacency matrix $\mathrm{adj}(G)$ of $G$.
Every stabilizer state $\ket{\psi}$ is, up to multiplication by single qubit Clifford unitaries, equivalent to some graph state $\ket{G}$.
Given a graph state $\ket{G}$ and a vertex $v\in V(G)$, one can prepare the graph state $\ket{G^*v}$ associated with the LC of $G$ with respect to vertex $v$ by acting on $\ket{G}$ with the unitary $U_v:=\sqrt{-i X_v}\otimes_{u \in N(v)} \sqrt{i Z_{u}}$.
Crucially, $U_v$ is a local Clifford operation hence, all graph states that are associated by some sequence of LC transformations share the same entanglement properties and, the local Clifford orbit of a given stabilizer state may intersect multiple graph states.

\emph{Notation:} We use ${\phasegate}$, $H$ to denote the single qubit phase and Hadamard Clifford unitaries respectively.

\section{Classification}
\label{sec:problem_classification}

We observe the following symmetry of the objective in \cref{eq:defin_m}. For any Clifford unitary $U$:
\begin{align}\label{eq:symmetry_optimization_fn}
	m(\psi,\Pi_{\sscase,k})=m(\psi', U\Pi_{\sscase,k} U^{\dagger}),
\end{align}
where $\ket{\psi'}:=U\ket{\psi}$. We note that the complexity of $\probdec{\stab{n},\sscase}$ is qubit order independent by choosing $U$ from the set of Cliffords generated by swap gates.

For tensor-power Cliffords of the form $U= U_{loc}^{\otimes n}$ where $U_{loc}\in \clif{1}$, this symmetry imposes an equivalence between any pair of problems $\probdec{\stab{n},\sscase}$ and $\probdec{\stab{n},\sscase'}$ when $\sscase$ and $\sscase'$ are related by conjugation by $U_{loc}$.
Below, we define nine specific choices of the subset $\sscase$ which classify all
choices of non-empty $\sscase$ in the sense that every $\sscase\neq \emptyset$
is contained in the orbit of exactly one of the below representatives and any pair of subsets $\sscase, \sscase'$ associated with the same representative have the same complexity:

\begin{equation}\label{eq:S_representatives}
	\begin{aligned}
		\pset{100} & :=\{\ketbra{0}{0}\}                                                                               \\
		\pset{200} & :=\{\ketbra{0}{0},\ketbra{1}{1}\}                                                                 \\
		\pset{110} & :=\{\ketbra{0}{0},\ketbra{+}{+}\}                                                                 \\
		\pset{210} & :=\{\ketbra{0}{0},\ketbra{+}{+}, \ketbra{-}{-}\}                                                  \\
		\pset{111} & :=\{\ketbra{0}{0},\ketbra{+}{+}, \ketbra{+i}{+i}\}                                                \\
		\pset{220} & :=\{\ketbra{0}{0},\ketbra{1}{1},\ketbra{+}{+}, \ketbra{-}{-}\}                                    \\
		\pset{211} & :=\{\ketbra{0}{0},\ketbra{+}{+}, \ketbra{-}{-}, \ketbra{+i}{+i}\}                                 \\
		\pset{221} & :=\{\ketbra{0}{0},\ketbra{+}{+}, \ketbra{-}{-}, \ketbra{+i}{+i}, \ketbra{-i}{-i}\}                \\
		\pset{222} & :=\{\ketbra{0}{0},\ketbra{1}{1},\ketbra{+}{+}, \ketbra{-}{-}, \ketbra{+i}{+i}, \ketbra{-i}{-i}\}.
	\end{aligned}
\end{equation}
We use $\reps:=\set{\pset{100}, \pset{200}, \pset{110}, \pset{210}, \pset{111}, \pset{220}, \pset{211}, \pset{221}, \pset{222}}$ to denote this set of representatives.
We formalize and prove these statements in \cref{app:Classification}.

\section{Complexity Results}

We present our complexity results below. In \cref{sec:ComplexityEasy}, we discuss $\probdec{100}$, $\probdec{200}$ and generalizations.  In \cref{sec:ComplexityIncompleteBases}, we present hardness results for the remaining representatives that contain an incomplete basis, i.e., $\probdec{110}$, $\probdec{111}$, $\probdec{210}$, $\probdec{211}$ and $\probdec{221}$. Finally, in \cref{sec:ComplexityCompleteBases}, we present hardness results for the remaining representatives that contain only complete bases; these are $\probdec{220}$ and $\probdec{222}$.

\subsection{Easy consequences of the Gottesman--Knill theorem}\label{sec:ComplexityEasy}

\subsubsection{\texorpdfstring{$\probdec{100}$}{NSPS(100)}}
\label{sec:100}
We note that $\probdec{100}$ is a trivial setting where there are no choices to optimize over and hence this class can be decided in $\cP$ by application of the Gottesman--Knill theorem.
For the more general input state setting, the situation is different.
For example, given a description of a state $\ket{\psi}$ generated by acting on $\ket{0}^{\otimes n}$ with $n^{\order{1}}$ gates from a universal gate set, computing
$\abs{\braket{0^n}{\psi}}^2$
to multiplicative precision is $\#\P$-hard \cite{Aaronson2005postselection, Bremner2010IQP, kuperberg2014hardapproximatejonespolynomial, Dawson2005olynomial_equations_over_Z2}.

\subsubsection{\texorpdfstring{$\probdec{200}$}{NSPS(200)}}
Let us now consider the set of computational basis projectors $\sscase=\pset{200}$ where $m(\psi,\projset{200}{k})$ is the probability of the most likely measurement outcome when the first $k$ qubits of $\ket{\psi}$ are measured in the computational basis. It is easy to show that for all stabilizer states $\ket{\psi}$, the optimal value $m(\psi,\projset{200}{k})$ and an optimizing projector can be efficiently computed.

\begin{lemma}\label{lem:200complexity}
	For any $n$-qubit stabilizer state $\ket{\psi}$ and $k\in [n]$, a greedy algorithm can efficiently find an $x^{*}\in \bitstring{k}$ satisfying $\ml{\ketbra{x^*}{x^*}\otimes I^{\otimes {n-k}}}{\psi}=m(\psi,\projset{200}{k})$.
\end{lemma}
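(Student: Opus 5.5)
The plan is to rely on a standard fact about stabilizer states: measuring a subset of qubits of $\ket{\psi}$ in the computational basis gives a distribution that is uniform over an affine subspace of $\mathbb{F}_2^k$. Concretely, I would show that for every qubit $j\le k$, conditioned on any fixed outcomes of qubits $1,\dots,j-1$ (with nonzero probability), the outcome on qubit $j$ is either deterministic or uniformly random. By the Gottesman--Knill theorem, we can decide which case holds and compute the deterministic value if there is one. The test is whether $\pm Z_j$ lies in the stabilizer group of the post-measurement state. Once this is established, every outcome string $x$ with nonzero probability has probability exactly $2^{-r}$, where $r$ is the number of random steps. We will also see that $r$ does not depend on which nonzero-probability branch is followed.

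The greedy algorithm is then as follows. Process qubits $j=1,\dots,k$ in order. At step $j$, measure $Z_j$ on the current (already projected) stabilizer state using the standard tableau update. If the outcome is deterministic, set $x^*_j$ to the forced value. Otherwise, choose $x^*_j=0$, which picks the more likely outcome and loses a factor exactly $1/2$. Then project and update the tableau. Each step is polynomial time, so the total runtime is $\mathrm{poly}(n)$. The resulting $x^*$ satisfies $\ml{\ketbra{x^*}{x^*}\otimes I^{\otimes n-k}}{\psi}=2^{-r}$, where $r$ counts the random steps.

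To show optimality, I would argue that $r$ equals the same quantity for every branch. Whether $Z_j$ is determined after projecting qubits $1,\dots,j-1$ depends only on whether $Z_j$ lies in the group generated by $\mathrm{Stab}(\psi)$ and $Z_1,\dots,Z_{j-1}$, up to sign. Equivalently, $Z_j$ commutes with every stabilizer element that commutes with $Z_1,\dots,Z_{j-1}$. This is a condition on the unsigned Pauli group and does not depend on the outcomes chosen. So every nonzero-probability outcome string has probability $2^{-r}$, with the same $r$, and the greedy choice is optimal. Alternatively, and matching \cref{lem:200_UB}, $r$ equals the rank of the $X$-check matrix restricted to the first $k$ columns.

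The main obstacle I expect is making the branch-independence argument clean. That is, I need to show rigorously that the set of deterministic steps is the same for all branches, rather than only that each step loses at most a factor of $2$. I would handle this with the group-membership characterization above. A fallback is to cite the affine-subspace structure of stabilizer measurement distributions directly.
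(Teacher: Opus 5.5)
Your proof is correct, but its decisive step differs from the paper's.

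Both arguments rest on the same Gottesman--Knill fact. After projecting qubits $1,\dots,j-1$ onto any nonzero-probability outcome, the stabilizer group is, up to signs, generated by $Z_1,\dots,Z_{j-1}$ together with the elements of $\mathrm{Stab}(\psi)$ that commute with them. This does not depend on which outcomes were obtained.

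The paper uses this fact locally. In a random step it deduces that the two post-selected states are Pauli-related, $\psi_b = P^b\phi$. It then invokes the invariance of $\Pi_{(200),k-1}$ under Pauli conjugation (\cref{eq:symmetry_optimization_fn}) to conclude that both subtrees have the same optimum, so the choice is immaterial.

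You use the fact globally. The pattern of deterministic versus random steps is the same on every branch, so every string in the support has probability exactly $2^{-r}$. This gives you more than the lemma asks for: any support string is optimal, and the optimum is $2^{-r}$, where $r$ is the rank of the first $k$ columns of the $X$-check matrix. That claim is correct and is the natural extension of \cref{lem:200_UB} to $k<n$. The paper's symmetry argument, by contrast, needs no counting and fits the Clifford-symmetry viewpoint used elsewhere in the paper.

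You should add one sentence of justification. The group generated by $\mathrm{Stab}(\psi)$ and $Z_1,\dots,Z_{j-1}$ is generally non-abelian. So it is not immediate that $\pm Z_j$ lying in it means $Z_j$ is determined after the measurements. It does hold: if $Z_j \propto g\,Z_{i_1}\cdots Z_{i_s}$ with $g\in\mathrm{Stab}(\psi)$, then $g$ is a $Z$-type Pauli. Hence $g$ commutes with $Z_1,\dots,Z_{j-1}$ and survives in the post-measurement stabilizer group.
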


\begin{proof}

	Let us first show a basic fact.
	Any two $n$-qubit stabilizer states with identical generators up to signs are related by a Pauli $P$. To see this, just use a Clifford $C$ to send the generators of one state to $Z_1, \ldots, Z_{n}$ and it is clear that the other state's generators are sent of the same set with some sign flips. These sign flips can be removed by conjugating the state with a Pauli $P'\in \set{I,X}^{\otimes n}$ where the locations of the Pauli-$X$ operators in $P'$ is determined by the location of the sign flips. Finally, returning back to the original basis by applying $C^{\dagger}$, we observe that $P:=C^{\dagger}P' C$ achieves the required sign flips in this basis.

	To prove the lemma, first note that all stabilizer projectors acting on a stabilizer state produce a stabilizer state up to normalization. When, without loss of generality, the first qubit of a stabilizer state $\ket{\psi}$ is projected onto $\ket{b}$, for $b\in \set{0,1}$, the squared $2$-norm changes by a factor $f_b\in \set{0, 1, 1/2}$. If $f_b=0$ then $f_{b\oplus 1}=1$ and so greedily projecting onto $\ket{b\oplus 1}$ can never be worse. If $f_b=1$ then $f_{b\oplus 1}=0$ and so greedily projecting onto $\ket{b\oplus 1}$ can never be better.

	The key claim to completing the proof concerns the final case when $f_0=1/2=f_{1}$, i.e., the ``non-deterministic measurement'' case. In this case, we will show that $\frac{1}{2}m(\psi_b,\projset{200}{k-1})=m(\psi,\projset{200}{k})$, i.e., the LHS is independent of $b$ where $\ket{\psi_b}:=\sqrt{2}\bra{b}\otimes I \ket{\psi}$. Hence, we are never worse off by making the greedy choice on deterministic measurements and our choices on non-deterministic measurements are inconsequential so a random choice can be made.

	We now claim that in the ``non-deterministic measurement'' case,
	the post-selected states are given by $\ket{\psi_b}= P^b\ket{\phi}$, where $\ket{\phi}$ is some $n-1$ qubit stabilizer state independent of $b$ and $P$ is some $n-1$ qubit Pauli. This immediately leads to the $m(\psi_0,\projset{200}{k-1})=m(\psi_1,\projset{200}{k-1})$ by \cref{eq:symmetry_optimization_fn} and the invariance of $\projset{200}{k-1}$ under conjugation by any Pauli $P$. Hence, for either choice of $b$:
	\begin{align*}
		m(\psi,\projset{200}{k}) & =\frac{1}{2}\max_{a} m(\psi_a,\projset{200}{k-1})=\frac{1}{2}m(\psi_b,\projset{200}{k-1}).
	\end{align*}

	To see that $\ket{\psi_b}= P^b\ket{\phi}$, let us consider the standard Gottesman--Knill measurement procedure \cite{nielsen2000quantum} applied to the original state $\ket{\psi}$. Observe that in the non-deterministic case,
	a stabilizer generator of $\ket{\psi}$ anticommutes with $Z_1$, the local Pauli-$Z$. We reselect a set of generators $\set{g_i}_i\in [n]$ so that $g_1$ anticommutes with $Z_1$ and the rest all commute with it. Now, the description of the post-measurement state is given by replacing $g_1$ with $(-1)^b Z_1$ depending on the measurement outcome $b$. We can now reselect the generators $g_2, \ldots, g_n$ such that they are the identity on the first qubit. This process involves multiplying any $g_j$ (for $j>1$) with Pauli-$Z$ support on the first qubit by $(-1)^b Z_1$ hence the signs of some generators may depend on $b$, but the generators up to signs do not depend on the outcome. Since two stabilizer states with identical generators up to signs are related by a Pauli $P$, the post-measurement state corresponds to $\ket{b} \otimes P^{b}\ket{\phi}$ and discarding the first qubit leads to the claim $\ket{\psi_b}=P^{b}\ket{\phi}$.

\end{proof}

We note, however, that for non-stabilizer $\ket{\psi}$, greedy algorithms do not work in general. For example, we note that for the non-stabilizer state $\ket{\psi}=\sqrt{\frac{2}{7}}\ket{00}+\sqrt{\frac{2}{7}}\ket{01}+\sqrt{\frac{3}{7}}\ket{11}$, a greedy algorithm fails since we have $\ml{\ketbra{0}{0}\otimes I}{\psi}>\ml{\ketbra{1}{1}\otimes I}{\psi}$ but for $k=2$, the maximum is uniquely achieved by $x^*=(1,1)$.

For the remainder of this work, we focus on stabilizer state inputs. These are associated with a set of stabilizer generators that can be written in the form of a so-called stabilizer check matrix \cite{nielsen2000quantum}. As we will next show, for an input state $\ket{\psi}$, the rank of the $X$-component of the stabilizer check matrix determines the quantity $m(\psi,\pset{200})$. This fact will be important for other choices of subset $\sscase$.

\begin{definition}
	Let $G=(g(1),\ldots,g(n))$ be a set of commuting, independent generators associated with an $n$-qubit stabilizer state $\ket{\psi}$ where $g(i)=(-1)^{a_i} g_1(i)\otimes \ldots \otimes g_n(i) \in \mathcal{P}$. We define the binary valued $n\times n$ matrix $M=M(G)$ entry wise according to:
	\begin{align}
		[M]_{i,j}=\begin{cases} 0 & \text{if}~ [g_j(i),Z_j]=0\\1 & \text{otherwise.} \end{cases}
	\end{align}
	We call $M$ a \emph{commutativity matrix} of $\ket{\psi}$ if $\ket{\psi}$ possesses a generator matrix $G$ such that $M=M(G)$. If $M$ is a commutativity matrix of $\ket{\psi}$ and $M^*$ is the reduced row echelon form of $M$ then we call $M^*$ the \emph{canonical commutativity matrix} of $\ket{\psi}$.
\end{definition}
We note that $M(G)$ does not depend on the phases of the generators $g(i)$ and hence, $M(G)$ is associated with a stabilizer basis. Since the choice of phases $\set{a_1, \ldots, a_n}\in \bitstring{n}$ correspond to different elements of the stabilizer basis, all elements in the basis have a simultaneous commutativity matrix. Different commutativity matrices can be associated with the same stabilizer basis and these are all related to each other by elementary row operations since these are associated with changing the generating set (modulo phases) of the stabilizer group (modulo phases).

\begin{lemma}\label{lem:200_UB}
	For $\ket{\psi}$ an $n$-qubit stabilizer state,
	\begin{align}
		m(\psi,\pset{200})=2^{-r},
	\end{align}
	where $r:=\rank{M(G)}$ is the rank of a commutativity matrix of $\ket{\psi}$.
\end{lemma}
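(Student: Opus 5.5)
We need to show that $\max_{x\in\bitstring{n}}|\braket{x}{\psi}|^2=2^{-r}$, where $r$ is the rank of the $X$-part of the check matrix; note $[M]_{i,j}=1$ exactly when $g_j(i)\in\{X,Y\}$. First, $r$ is well defined: different generating sets are related by elementary row operations, which preserve rank, so we may work with the canonical commutativity matrix $M^*$. Row-reducing the generators the same way, we obtain a generating set $g(1),\ldots,g(r)$ whose $X$-parts are linearly independent, and $g(r+1),\ldots,g(n)$ whose $X$-parts vanish, i.e. products of $Z$'s with signs.

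\textbf{Upper bound.} Let $T=\langle g(1),\ldots,g(r)\rangle$. Distinct elements of $T$ have distinct nonzero $X$-parts, so for every $x$ and every $t\in T\setminus\{I\}$ we have $\bra{x}t\ket{x}=0$. Write $\ketbra{\psi}{\psi}=2^{-n}\sum_{s\in\mathcal S}s$ and group the sum as $\sum_{t\in T}\sum_{z\in Z_\psi} tz$, where $Z_\psi$ is the diagonal subgroup generated by $g(r+1),\ldots,g(n)$. Then
\[
|\braket{x}{\psi}|^2=2^{-n}\sum_{z\in Z_\psi}\bra{x}z\ket{x}\le 2^{-n}\cdot 2^{n-r}=2^{-r}.
\]
Here we use that $tz$ has $X$-part equal to that of $t$, so only $t=I$ contributes.

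\textbf{Achievability.} Equality holds iff $\bra{x}z\ket{x}=1$ for all $z\in Z_\psi$, i.e. $x$ satisfies the $n-r$ linear equations over $\mathbb F_2$ saying each $g(j)$, $j>r$, has eigenvalue $+1$ on $\ket{x}$. The main thing to check is that this system is consistent. It is, because $Z_\psi$ is a group of commuting diagonal Paulis not containing $-I$ (as $-I\notin\mathcal S$), so the signs define a consistent affine system. Equivalently, if every $x$ had some $z$ with $\bra{x}z\ket{x}=-1$, the average of $\bra{x}z\ket{x}$ over the group would be $0$ for every $x$, giving $\sum_x|\braket{x}{\psi}|^2=0$, which contradicts normalization. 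Such an $x$ therefore exists and attains $2^{-r}$, so $m(\psi,\pset{200})=2^{-r}$.

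Alternatively, one could run the greedy procedure of \cref{lem:200complexity}: each non-deterministic measurement halves the norm, and it remains to count that exactly $r$ of them occur. The direct Fourier argument above is cleaner, and the only delicate point is the consistency of the diagonal sign constraints, handled as described.
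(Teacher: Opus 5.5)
Your proof is correct, but it takes a different route from the paper. The paper takes the generating set whose commutativity matrix is the reduced row echelon form $M^*$ and runs the standard Gottesman--Knill measurement procedure, measuring $Z$ on the $r$ pivot qubits first. Each such measurement anticommutes with exactly one generator, so it is uniformly random and costs a factor $1/2$. All remaining $Z$ measurements then commute with the stabilizer and are deterministic. This yields the whole outcome distribution operationally, uniform over $2^r$ outcomes. Existence of a maximizer is automatic there, since the procedure always produces some outcome.

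You instead expand $|\psi\rangle\langle\psi| = 2^{-n}\sum_{s\in\mathcal S}s$ and factor $\mathcal S$ as $T\times Z_\psi$. Only the diagonal subgroup $Z_\psi$, of order $2^{n-r}$, survives in $\langle x|\cdot|x\rangle$, which gives the upper bound. You then obtain existence by a character-sum/normalization argument: $|\langle x|\psi\rangle|^2\in\{0,2^{-r}\}$ for every $x$, and since these sum to $1$, exactly $2^r$ strings attain $2^{-r}$. The fact that makes the group average vanish is that $z\mapsto\langle x|z|x\rangle$ is a $\pm1$ character of $Z_\psi$; it is worth stating explicitly.

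Your version is a self-contained algebraic computation that does not rely on measurement-update rules. It also identifies the support explicitly as the solution set of $n-r$ independent affine equations, at the cost of the extra consistency step, which you handle correctly. The paper's version is more directly constructive and meshes with the greedy algorithm of \cref{lem:200complexity}. The alternative you sketch at the end is essentially the paper's proof, where the RREF ordering is what makes the count of exactly $r$ random outcomes immediate.
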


\begin{proof}
	Consider the generating set corresponding to  $M^*$. Let us measure $\ket{\psi}$, using the standard stabilizer state measurement procedure \cite{nielsen2000quantum}. We will measure the qubits in the following order. First we will measure the qubit corresponding to the first leading 1 in $M^*$, then the second leading 1, and so on until we have measured $r$ qubits. For each of these, exactly one distinct stabilizer anticommutes with the measured Pauli-Z operator, each stabilizer is killed and contributes a factor of $1/2$ to the measurement outcome probability. The remaining qubits can now be measured but these all commute with the Pauli-Z measurements so no stabilizer will be killed and these measurements will have deterministic outcomes. So the outcome distribution is deterministic on $n-r$ of the qubits and uniform on the rest and this gives our claim.
\end{proof}

\subsection{Hardness with a partial basis}\label{sec:ComplexityIncompleteBases}

The first of our hardness proofs is for the case where for some basis ($\pauliX$, $\pauliY$, or $\pauliZ$) only one projector is in our set -- a partial basis.
\begin{theorem}
	\label{thm:partial-bases-hardness}
	$\probdec{abc}$ is $\NP$-complete for $abc \in  \set{110, 111, 210, 211, 221}$. That is, whenever $a$, $b$, or $c$ is $1$ except $100$, which is trivial (see \cref{sec:100}).
\end{theorem}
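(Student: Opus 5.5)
The plan is to reduce from $\maxlin$, in which we are given $m$ equations $x_{a}\oplus x_{b}\oplus x_{c}=\beta$ over variables $x_1,\dots,x_N$ and must decide whether at least $t$ of them can be satisfied simultaneously. Membership in $\NP$ has already been argued for every $\sscase$, so only hardness remains.

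\textbf{Why the partial basis is the source of hardness.} The proof of \cref{lem:200complexity} shows that when a full basis pair $\{\ketbra{0}{0},\ketbra{1}{1}\}$ is available, the signs of the stabilizer generators are irrelevant: states that differ only in signs are related by a Pauli, and the projector set is Pauli-invariant. Conversely, when only $\ketbra{0}{0}$ is available and $\ketbra{1}{1}$ is not, the overlap depends on the sign $\beta$ of a $Z$-type stabilizer $(-1)^{\beta}Z_aZ_bZ_c$. The reduction will use this sign-sensitivity to encode the right-hand sides $\beta$ of the equations.

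\textbf{The gadgets.} After a local Clifford change of frame I will assume that $\sscase\supseteq\{\ketbra{0}{0},\ketbra{+}{+}\}$; every representative in the theorem has this form.
\begin{itemize}
\item \emph{Variable gadget.} Each variable $x_i$ gets a qubit, or a block of qubits, on which choosing $\ketbra{0}{0}$ versus $\ketbra{+}{+}$ encodes $x_i=0$ versus $x_i=1$. Equivalently, the choice decides whether the $Z$-type or the $X$-type stabilizers on that block are "measured".
\item \emph{Equation gadget.} Each equation gets a small stabilizer state, coupled to its three variable blocks by CZ/CNOT gates. It should be designed so that, after the projections, the surviving stabilizer containing $(-1)^{\beta}Z\cdots Z$ contributes a factor $1$ when the chosen assignment satisfies the parity and a factor $1/2$ otherwise.
\end{itemize}
By \cref{lem:200_UB}-style rank counting, if every equation gadget contributes independently, we get
\[
m(\psi,\Pi_{\sscase,k}) = 2^{-(c_0 + \#\text{unsatisfied})},
\]
where $c_0$ is an offset that I can compute explicitly. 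The threshold $p$ in $\probdec{abc}$ is then set as $c_0+m-t$.

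\textbf{Robustness, the main obstacle.} For the larger sets $\pset{210},\pset{211},\pset{221},\pset{111}$, the solver may pick projectors outside the intended $\{0,+\}$ pair, such as $\ketbra{-}{-}$, $\ketbra{+i}{+i}$ or $\ketbra{-i}{-i}$. It may also pick inconsistent values for a variable that appears in several equations. I would handle both problems together.
\begin{itemize}
\item \emph{Consistency.} Replace each variable by many copies entangled in a repetition-code (GHZ-like) pattern. Disagreeing choices across the copies then cost a factor $1/2$ per mismatch, and this penalty is amplified above any gain from violating equations.
\item \emph{Stray projectors.} For the remaining projectors, check case by case that any choice outside $\{0,+\}$ either annihilates the state or halves the norm on a gadget qubit. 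Such a choice can therefore be replaced by $\ketbra{0}{0}$ or $\ketbra{+}{+}$ without loss.
\end{itemize}
I expect the second item to be the delicate step, since it needs a gadget design that simultaneously works for all five representatives. I would first try to design the gadget for $\pset{221}$, the largest set, and then observe that every optimal projector produced there already lies in the smaller sets. This would give hardness for $\pset{110}$ and for the other subsets as well. The remaining work is bookkeeping: verifying the factor $1$ versus $1/2$ claim through the Gottesman--Knill measurement procedure used in \cref{lem:200complexity}.
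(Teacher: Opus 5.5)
\textbf{There is a genuine gap: the proposal stops where the proof begins.} Your framing matches the paper's. You reduce from $\maxlin$, trace hardness to the absence of $\ketbra{1}{1}$ (a qubit in $\ket{0}$ is caught with probability $1$, one in $\ket{1}$ with at most $1/2$), and use the sandwich $\{\ketbra{0}{0},\ketbra{+}{+}\}\subseteq\sscase\subseteq\pset{221}$, which holds for every listed representative. So it suffices to build instances whose optimum over $\pset{221}$ uses only $\ketbra{0}{0}$ and $\ketbra{+}{+}$. However, no gadget is actually given: ``it should be designed so that'' is a specification, not a construction. The formula $m=2^{-(c_0+\#\text{unsatisfied})}$ is asserted under an independence hypothesis that is exactly what must be proved, namely an upper bound on the overlap with \emph{every} product state from $\pset{221}$. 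Your variable encoding is also unclear as stated. A basis choice on a single control qubit does not feed a bit into a parity: projecting the $\ket{+}$ control of a CNOT onto $\ketbra{+}{+}$ leaves the target in $\ket{+}$, erasing the clause rather than flipping it.

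\textbf{The step you propose for the upper bound would fail.} Every single-qubit stabilizer projection rescales the squared norm by a factor in $\{0,1/2,1\}$, so ``annihilates or halves'' carries no information. A local swap such as $\ketbra{-}{-}\to\ketbra{+}{+}$ is not free either. The two post-measurement states differ by a Pauli, and $\Pi_{\sscase,k}$ is not Pauli-invariant once $\ketbra{1}{1}\notin\sscase$; this is the very sign-sensitivity you intend to exploit. More fundamentally, no local argument controls product states that never commit to an assignment. If only $X$/$Y$ eigenprojectors are placed on variable qubits, the overlap is a coherent sum over all assignments and needs a global estimate.

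\textbf{What the paper supplies} (with $n$ variables, $m$ clauses, and $t$ the maximum number satisfiable):
\begin{itemize}
\item \emph{Gadgets.} Each variable gets $\ket{\chi_k}=(\ket{0^k1^k}+\ket{1^k0^k})/\sqrt{2}$, whose first-half qubits control CNOTs into clause qubits prepared in $\ket{b_j}$. A single $\ketbra{0}{0}$ anywhere in a register collapses it to one branch, so consistency is automatic. Constant Hamming weight makes both branches cost $2^{-k-1}$. Each clause qubit then ends in $\ket{0}$ or $\ket{1}$ according to whether its clause is satisfied.
\item \emph{Committed strategies.} Product states with some $\ketbra{0}{0}$ on a variable qubit are handled by a minimal-counterexample induction on the number of variables.
\item \emph{Uncommitted strategies.} All remaining product states are bounded by $\sqrt{F'}\le\sum_x|\langle\phi|x\rangle|\,|\langle c|Lx\oplus b\rangle|\,|\langle x|\chi\rangle|\le 2^{-\sum_i k_i}\,2^{-n/2}\,2^{n}$.
\item \emph{Padding.} Taking $k_i=2+\mathrm{occ}(x_i)$ gives $2^{n-2\sum_i k_i}\le 2^{-\sum_i k_i+t-m-n}$, which is the intended fidelity.
\end{itemize}
Your repetition-code penalty points in this direction. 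Without a quantitative comparison against the uncommitted superposition strategies, though, the reduction is not established.
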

We pointed out earlier that the problems are clearly \emph{in} $\NP$ by supplying the product state as a witness. It only remains to prove $\NP$-hardness by reduction from an $\NP$-complete problem. Here, we use the problem of maximizing the number of satisfied linear equations modulo $2$, for which the special case of size-$3$ clauses ($\maxlin$) is known to be $\NP$-complete by work of H\aa stad \cite{haastad2001some}.
\begin{problem}
Define the $\maxlin$ problem as follows. We are given a collection of variables $x_1, \ldots, x_n$, a threshold $T \in \mathbb N$, and a set of $m$ size-$3$ linear clauses of the form
$$
	x_{a_i} \oplus x_{b_i} \oplus x_{c_i} = d_i
$$
where $a_i, b_i, c_i \in [n]$ and $d_i \in \mathbb{F}_2$. The goal is to decide whether any assignment to the variables satisfies at least $T$ clauses.
\end{problem}
\begin{theorem}[H\aa stad \cite{haastad2001some}]
	$\maxlin$ is $\NP$-complete. Moreover, if there is a polynomial-time $c$-approximation to the $\maxlin$ optimization with constant $c < 2$ then $\P = \NP$.
\end{theorem}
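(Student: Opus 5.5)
Membership of $\maxlin$ in $\NP$ is immediate: the witness is an assignment $x\in\mathbb F_2^n$, and we count the satisfied clauses and compare the count with $T$ in polynomial time. Both $\NP$-hardness and the inapproximability claim will follow from a single gap statement. For every $\epsilon>0$ it is $\NP$-hard to distinguish instances in which some assignment satisfies at least a $(1-\epsilon)$ fraction of the clauses from instances in which every assignment satisfies at most a $(\tfrac12+\epsilon)$ fraction. Given a $c$-approximation with constant $c<2$, choose $\epsilon$ so small that $(1-\epsilon)/c>\tfrac12+\epsilon$. Running the approximation algorithm on the two kinds of instances then yields values on opposite sides of a threshold, so it decides the gap problem and $\P=\NP$. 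Taking $T=\lceil(1-\epsilon)m\rceil$ gives plain $\NP$-completeness. The factor $2$ is tight, since a uniformly random assignment satisfies half the clauses in expectation.

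The gap theorem itself I would obtain by the standard PCP route. First, start from the PCP theorem: a gap version of $\textsc{3-SAT}$ in which every variable occurs a bounded number of times. Convert it into a two-prover one-round game, i.e.\ Label Cover, with perfect completeness and soundness some constant $\delta<1$. Then apply Raz's parallel repetition theorem $u$ times. This produces a Label Cover instance with label sets $[L]\supset[R]$ of constant size, projection constraints $\pi:[L]\to[R]$, completeness $1$, and soundness $2^{-\Omega(u)}$, where $u$ is a constant chosen after $\epsilon$.

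Second, replace each vertex label by its long code. For a vertex on the left side this is a table $f:\{\pm1\}^L\to\{\pm1\}$, and for a vertex on the right side a table $g:\{\pm1\}^R\to\{\pm1\}$. Enforce folding, $f(-x)=-f(x)$, so that the tables are odd functions. The verifier picks a random constraint edge with projection $\pi$, a uniform $x\in\{\pm1\}^R$, a uniform $y\in\{\pm1\}^L$, and a noise vector $\mu\in\{\pm1\}^L$ with each coordinate equal to $-1$ independently with probability $\epsilon$. It accepts iff $g(x)f(y)f((x\circ\pi)\,y\,\mu)=1$. Because the tables are folded, this is a three-variable parity check, and enumerating all random choices, each weighted by its probability, yields the $\maxlin$ instance.

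Completeness is a direct check. When the tables are dictators of a satisfying labeling, the test fails only if the noise flips the relevant coordinate, which happens with probability $\epsilon$.

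Soundness is the main obstacle. Writing the acceptance probability as $\tfrac12+\tfrac12\,\mathbb E[\cdot]$ and expanding it in the Fourier basis gives
\begin{align*}
\mathbb E_{\text{edge}}\sum_{\beta}\hat g_{\pi_2(\beta)}\,\hat f_\beta^{\,2}\,(1-2\epsilon)^{|\beta|},
\end{align*}
where $\pi_2(\beta)$ is the set of $j\in[R]$ having an odd number of preimages in $\beta$. Folding forces $|\beta|$ to be odd, so $\pi_2(\beta)\neq\emptyset$. Suppose this expectation exceeds $2\epsilon$ (giving acceptance above $\tfrac12+\epsilon$). Then we can decode labels randomly: pick $\beta$ with probability $\hat f_\beta^2$ and a uniform element of it, and pick $\alpha$ with probability $\hat g_\alpha^2$ and a uniform element of it. Cauchy--Schwarz, together with the noise factor, which suppresses large $|\beta|$ through the bound $(1-2\epsilon)^{|\beta|}\le 1/(\epsilon|\beta|)$ up to constants, shows this labeling satisfies an $\Omega(\epsilon^{O(1)})$ fraction of the constraints. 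Choosing $u$ large enough that $2^{-\Omega(u)}$ falls below this quantity gives a contradiction, and completes the gap theorem.
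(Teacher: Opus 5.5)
The paper gives no proof of this statement; it imports it as a black box from H\aa stad, and your outline is essentially H\aa stad's own argument: gap Label Cover from the PCP theorem plus Raz's parallel repetition, folded long codes, the noisy three-query parity test with bias $\sum_\beta \hat g_{\pi_2(\beta)}\hat f_\beta^{2}(1-2\epsilon)^{|\beta|}$, and Cauchy--Schwarz decoding to a labeling of value $\Omega(\epsilon^{3})$. You also correctly derive both $\NP$-completeness and the factor-$2$ inapproximability from the $(1-\epsilon)$ versus $(\tfrac12+\epsilon)$ gap, and the details you leave implicit are routine: converting the probability weights into clause multiplicities by taking $\epsilon$ rational, and the degenerate one-variable equations that arise when $(x\circ\pi)\mu=\pm\mathbf 1$, which occur with negligible probability and are in any case allowed by the paper's clause format.
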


\noindent We can now proceed to the main proof.
\begin{proof}[Proof of \cref{thm:partial-bases-hardness}]
	Suppose we are given a general linear system (not necessarily clauses of length 3), consisting of $m$ clauses of the form
	\[
		x_{i_1} \oplus x_{i_2} \oplus \cdots \oplus x_{i_\ell} = b,
	\]
	on $n$ binary variables $x_1, \ldots, x_n$, and a threshold $T \in \mathbb N$. To prove our reduction, we need to convert these clauses to an instance of NSPS, with projectors $\pset{110}$, $\pset{111}$, $\pset{210}$, $\pset{211}$, or $\pset{221}$.
	The property these sets have in common (and which will feature in the proof) is that $\ketbra{0}{0}$ and $\ketbra{+}{+}$ are members, but $\ketbra{1}{1}$ is not. We construct an instance of NSPS to be robust to the presence or absence of the other projectors, $\ketbra{-}{-}$, $\ketbra{+i}{+i}$, and $\ketbra{-i}{-i}$.

	\begin{figure}
		\begin{center}
			\begin{quantikz}[row sep={0.4cm,between origins}, column sep=0.28cm]
				\lstick[4]{$V_1: \ket{\chi_{2}}$} & \ctrl{18} &  &  &  &  &  &  &  &  &  \\
				&  &  &  & \ctrl{18} &  &  &  &  &  &  \\
				&  &  &  &  &  &  &  &  &  &  \\
				&  &  &  &  &  &  &  &  &  &  \\
				\lstick[4]{$V_2: \ket{\chi_{2}}$} &  & \ctrl{14} &  &  &  &  &  &  &  &  \\
				&  &  &  &  &  &  & \ctrl{15} &  &  &  \\
				&  &  &  &  &  &  &  &  &  &  \\
				&  &  &  &  &  &  &  &  &  &  \\
				\lstick[4]{$V_3: \ket{\chi_{2}}$} &  &  &  &  & \ctrl{11} &  &  &  &  &  \\
				&  &  &  &  &  &  &  & \ctrl{11} &  &  \\
				&  &  &  &  &  &  &  &  &  &  \\
				&  &  &  &  &  &  &  &  &  &  \\
				\lstick[6]{$V_4: \ket{\chi_{3}}$} &  &  & \ctrl{6} &  &  &  &  &  &  &  \\
				&  &  &  &  &  & \ctrl{6} &  &  &  &  \\
				&  &  &  &  &  &  &  &  & \ctrl{6} &  \\
				&  &  &  &  &  &  &  &  &  &  \\
				&  &  &  &  &  &  &  &  &  &  \\
				&  &  &  &  &  &  &  &  &  &  \\
				\lstick{$C_1:\ \ket{1}$} & \targ{} & \targ{} & \targ{} &  &  &  &  &  &  &  \\
				\lstick{$C_2:\ \ket{0}$} &  &  &  & \targ{} & \targ{} & \targ{} &  &  &  &  \\
				\lstick{$C_3:\ \ket{1}$} &  &  &  &  &  &  & \targ{} & \targ{} & \targ{} &
			\end{quantikz}
		\end{center}
		\caption{An illustration of the reduction for an instance with clauses $x_1 \oplus x_2 \oplus x_4 = 1$, $x_1 \oplus x_3 \oplus x_4 = 0$, and $x_2 \oplus x_3 \oplus x_4 = 1$.}
		\label{fig:reduction-circuit-diagram}
	\end{figure}

	We build an NSPS instance from this as follows (see \cref{fig:reduction-circuit-diagram} for an example):
	\begin{itemize}
		\item The $j$th clause is associated with one qubit (which we call register $C_j$) in state $\ket{b_j}$. Together, the clause qubits are in state $\ket{b}$.
		\item Each variable $x_i$ is associated with a few qubits (called register $V_i$) in a state of the form
		      \[
			      \ket{\chi_{k_i}} = \frac{\ket{0^{k_i} 1^{k_i}} + \ket{1^{k_i} 0^{k_i}}}{\sqrt{2}},
		      \]
		      where $k_i \in \mathbb N$ is to be determined later. We need $k_i$ to be at least the number of occurrences of $x_i$, so there is a qubit for each occurrence in just the first half of the state.
		\item Let the product of all the variable states be
		      \[
			      \ket{\chi} := \bigotimes_{i} \ket{\chi_{k_i}}.
		      \]
		\item If the $p$th occurrence of variable $x_i$ is in clause $j$, then we add a CNOT gate controlled by qubit $p$ of register $V_i$, and targeting the qubit in $C_j$. Call the product of these CNOTs $U$, a unitary that implements
		      \[
			      U \ket{x} \ket{y} = \ket{x} \ket{Lx \oplus y}
		      \]
		      on the variable and clause qubits, where $L \in \mathbb F^{m \times 2\sum_{i} k_i}$ is some linear transformation.
	\end{itemize}
	The initial states are stabilizer states and the circuit is Clifford, so the resulting state $\ket{\psi}$ is a stabilizer state. We claim that the nearest product state has fidelity
	\begin{equation}
		F = 2^{-\sum_{i} k_i + t - m - n},
	\end{equation}
	where $t$ is the maximum number of clauses that can be satisfied in the $\maxlin$ instance. We will show there is a simple strategy that can achieve at least fidelity $F$, but first we show you cannot do better.

	Suppose some instance (with the minimum number of variables) has fidelity exceeding $F$. If the nearest stabilizer product state for this instance projects onto $\ketbra{0}{0}$ on \emph{any} variable qubit (say, in $V_i$) then that projector commutes through the control of the CNOT gate and acts on a qubit of $\ket{\chi_{k_i}}$. This collapses the state to $\ket{0^{k_i} 1^{k_i}}$ or $\ket{1^{k_i} 0^{k_i}}$ depending on where the projector is.

	Since the initial state of $V_i$ is now effectively a classical basis state, the CNOT gates controlled on these qubits do not create entanglement with the rest of the system. The NSPS is therefore a product of the NSPS for $V_i$, and the NSPS for the rest of the system. The best NSPS for a classical basis state is clear: take $\ketbra{0}{0}$ on $\ket{0}$ qubits and $\ketbra{+}{+}$ on $\ket{1}$ qubits, and thus the fidelity is $2^{-w}$ where $w$ is the Hamming weight. Since both branches of $\ket{\chi_{k_i}}$ have Hamming weight $k_i$, the fidelity is $2^{-k_i-1}$, where the $-1$ accounts for the original $\ketbra{0}{0}$ projector that collapsed the state.

	Moreover, CNOT gates from $V_i$ affect the clause gadgets exactly as if $x_i$ was assigned $0$ (if $V_i$ is in state $\ket{0^{k_i}1^{k_i}}$) or $1$ (if $V_i$ is in state $\ket{1^{k_i} 0^{k_i}}$). Thus, projecting any qubit of $V_i$ onto $\ketbra{0}{0}$ effectively makes an assignment for $x_i$, pays a penalty of $2^{-k_i-1}$ to the fidelity, and reduces to the circuit for a smaller linear system instance (without those variable qubits). It is not hard to see that the fidelity of the full linear system meets the claim if and only if the reduced system meets its respective fidelity claim.

	Notice that if you follow this strategy through with an assignment of all variables, you are left with only clause qubits and, furthermore, each clause is in state $\ket{0}$ if the clause is satisfied and $\ket{1}$ if it is unsatisfied. The NSPS for the clause qubits is a function of the Hamming weight again, so the fidelity is $2^{t-m}$ where $t$ is the number of satisfied clauses. This shows that the claimed fidelity $F = 2^{-\sum_{i} k_i + t - m - n}$ can be achieved with what we'll call the \emph{intended strategy}.

	It follows that our minimal counterexample NSPS must deviate from the intended strategy. In fact, it must project all qubits of all variable registers onto $\pauliX$- and/or $\pauliY$-basis states, otherwise we could reduce to a smaller counterexample. There is a quick calculation that can bound this case for us. The square root of the fidelity is as follows:
	\begin{align*}
		\sqrt{F'} = |\bra{\phi}\!\bra{c} U \ket{\chi}\!\ket{b}|
		 & \leq \sum_{x} \big| (\bra{\phi}\!\bra{c} U \ket{x}\!\ket{b}) \braket{x}{\chi} \big| \\
		 & = \sum_{x} |\braket{\phi}{x} \braket{c}{Lx \oplus b} \braket{x}{\chi}|              \\
		 & \leq 2^{-\sum_{i} k_i} \cdot 2^{-n/2} \cdot 2^{n}                                   \\
		 & = 2^{n/2 - \sum_{i} k_i},
	\end{align*}
	where $x$ is a binary string of length $2\sum_{i} k_i$, to match the length of $\ket{\chi}$. The final inequality comes from three facts:
	\begin{itemize}
		\item $|\braket{c}{Lx \oplus b}| \leq 1$, with equality iff $x$ corresponds to a satisfying assignment.
		\item $\braket{x}{\chi}$ is $0$ for all but $2^n$ cases, corresponding to the assignments of the variables. When it is not zero, we have $|\braket{x}{\chi}| = 2^{-n/2}$.
		\item $|\braket{\phi}{x}| = 2^{-\sum_{i} k_i}$, since we are measuring a length-$2 \sum_{i} k_i$ classical basis string in the $X$- or $Y$-basis.
	\end{itemize}
	After squaring, we conclude that the fidelity of anything outside the intended strategy is at most
	\[
		F' \leq 2^{n-\sum_{i} 2k_i}.
	\]

	As long as $F' \leq F$, the intended strategy gives the optimal NSPS, which we have already showed relates to the optimal solution of the linear system.
	\begin{align*}
		F' \leq 2^{n-2\sum_{i} k_i} & \leq F                             \\
		2^{n-2\sum_{i} k_i}         & \leq 2^{-\sum_{i} k_i + t - m - n} \\
		n-2\sum_{i} k_i             & \leq -\sum_{i} k_i + t - m - n     \\
		2n + m - t                  & \leq \sum_{i} k_i.
	\end{align*}
	We have freedom to pick the $k_i$'s to make this inequality hold. For example, we can take $k_i$ to be two plus the number of occurrences $\occ(x_i)$, of variable $x_i$, so that
	\[
		\sum_{i} k_i \geq \sum_{i} (2 + \occ(x_i)) = 2n + \sum_{i} \occ(x_i) \geq 2n + m,
	\]
	since some variable occurs in every clause. This finishes the proof.
\end{proof}

\subsection{Hardness with full bases}\label{sec:ComplexityCompleteBases}

\begin{theorem}\label{thm:hardness_of_220_and_222}
	$\probdec{220}$ and $\probdec{222}$ are $\NP$-complete.
\end{theorem}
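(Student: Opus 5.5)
Membership in $\NP$ was already established, so the work is $\NP$-hardness. I reduce from $\textsc{Independent-Set}$, equivalently from computing the vertex cover number $\tau(G)$ via Gallai's lemma. Given $G=(V,E)$, I form the 3-subdivision $H=H(G)$ and the graph state $\ket{H}$ on $|V|+2|E|$ qubits. The target is \cref{eq:222toMIS}:
\[
-\log_2 m(H,\pset{220})=-\log_2 m(H,\pset{222})=\tau(G)+|E|.
\]
With this identity, deciding $\probdec{220}$ or $\probdec{222}$ with $k=n$ and $p$ chosen appropriately decides whether $\tau(G)\le K$. Since $\pset{220}\subseteq\pset{222}$, we have $m(H,\pset{220})\le m(H,\pset{222})$. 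It therefore suffices to prove an achievability bound using only $\pset{220}$ and a converse bound valid for all of $\pset{222}$.

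\emph{Achievability.} Take a minimum vertex cover $C$ of $G$. On each edge $uv$ of $G$ there are two subdivision vertices $a_{uv}$ (adjacent to $u$) and $b_{uv}$ (adjacent to $v$). The plan is to apply Hadamards to a set $A$ of vertices of $H$ and measure everything in the computational basis. By \cref{lem:200_UB}, the fidelity is $2^{-\rank M}$, where $M$ is the commutativity matrix of $\prod_{w\in A}H_w\ket{H}$. For graph generators, $M$ is the identity with the rows and columns indexed by $A$ replaced by the corresponding columns of the adjacency matrix. Equivalently, $\rank M=|V(H)\setminus A|+\rank(\mathrm{adj}(H)[A,A])$. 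I choose $A$ to contain the independent set $V\setminus C$ together with one subdivision vertex per edge. The subdivision vertex is picked so that $A$ induces a matching of size as small as possible; for an edge with exactly one endpoint in $C$, say $u\in C$, I pick $b_{uv}$, which is adjacent to $v\notin C$. Then I verify that the rank count comes to $|C|+|E|$.

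\emph{Converse.} Here I must show that every choice from $\pset{222}$ gives fidelity at most $2^{-\tau(G)-|E|}$. The plan is a lower bound on the "number of killed stabilizers." For a product state $\bigotimes_w\ket{\phi_w}$ with $\phi_w$ an eigenstate of Pauli $P_w\in\{X,Y,Z\}$, the overlap squared is $2^{-(n-\dim(\mathcal S\cap\mathcal P))}$, where $\mathcal P$ is the group generated by the $\pm P_w$. So I need $\dim(\mathcal S_H\cap\langle P_w\rangle)\le n-\tau(G)-|E| = \alpha(G)+|E|$. I would argue this locally on the edge gadgets: each stabilizer element in the intersection has support constrained by the path structure $u - a - b - v$. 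I would then bound the dimension via a charging argument, assigning at most one independent element per edge gadget plus elements corresponding to an independent set of original vertices. A natural tool is the fact that $\mathcal S_H\cap\langle P_w\rangle$ corresponds to a subspace whose $X$-part vectors $x\in\mathbb F_2^{V(H)}$ must satisfy the condition that $\mathrm{adj}(H)x$ is compatible with the chosen basis at each vertex.

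I expect this converse to be the main obstacle. In particular, I must rule out $Y$-basis choices, which are allowed in $\pset{222}$, beating the bound. My first attempt would be to use invariance of $m(\cdot,\pset{222})$ under local complementation to normalize $Y$ choices, or alternatively to prove the dimension bound by induction, deleting one edge gadget at a time and showing each deletion changes the exponent by exactly the right amount.
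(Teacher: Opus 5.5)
\textbf{There is a genuine gap: the converse bound is never proved, and the achievability choice is also backwards.}

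Your framing is sound. Because $\pset{220}\subseteq\pset{222}$, it suffices to prove a $\pset{220}$ lower bound and a $\pset{222}$ upper bound; the paper proves the $\pset{220}$ converse separately as a warm-up, but it is implied. Your quantity $\dim(\mathcal S_H\cap\langle\pm P_w\rangle)$ is exactly $\eta(\mathrm{adj}(H)[W]+D)$, where $W$ is the set of $X$/$Y$ qubits and $D$ marks the $Y$ ones. That is the paper's nullity reformulation.

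\emph{The achievability slip.} Take an edge $uv$ with $u\in C$ and $v\notin C$. Then $v\in A$, so choosing $b_{uv}$ (adjacent to $v$) puts the edge $\{b_{uv},v\}$ inside $H[A]$. Each non-isolated $v\in V\setminus C$ becomes the center of a star, each star adds $2$ to $\rank(\mathrm{adj}(H)[A,A])$, and the count overshoots $|C|+|E|$. You need the subdivision vertex adjacent to the cover endpoint $u$. Then $A$ is an independent set of $H$ of size $\alpha(G)+|E|$, the induced matrix vanishes, and the count is exactly $\tau(G)+|E|$.

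\emph{The missing converse.} The converse is the whole content of the hardness proof, and you only gesture at it. None of the three suggested routes works as stated:
\begin{itemize}
\item The charging of ``one element per edge gadget plus an independent set'' is not justified. Null vectors of $\mathrm{adj}(H)[W]+D$ can mix adjacent original vertices with gadget vertices. For a single edge with $D=\mathrm{diag}(1,0,0,1)$ on the path $u\,p\,q\,v$, the all-ones vector is null.
\item Induction on gadgets stalls when $\alpha(G-e)=\alpha(G)+1$. There you must show that re-inserting the gadget never raises the \emph{maximum} nullity, yet for an individual $W$ it can raise it by one. So you would need exactly the global information you are trying to prove.
\item Local complementation leaves the class of $3$-subdivisions, and it trades $X$ for $Y$ on neighbours rather than eliminating $Y$. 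It therefore does not reduce to the $\pset{220}$ case.
\end{itemize}

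\emph{What the paper does instead.} The missing idea is a global rank-versus-vertex-cover trade-off:
\begin{itemize}
\item Split the edges by how many subdivision vertices lie in $W$ ($E_0,E_1,E_2$) and by whether the resulting diagonal block is singular ($S_1\subseteq E_1$, $S_2\subseteq E_2$).
\item Schur-complement away all invertible blocks. This leaves $\left(\begin{smallmatrix}\widetilde M & B_0 & \Delta\\ B_0^T&0&0\\ \Delta^T&0&0\end{smallmatrix}\right)$, whose zero block forces rank at least $2\rank(B_0\;\Delta)=2(|U|+|E(F)|)$. Here $U\subseteq W_G$ is obtained by peeling off standard-basis columns, and $F$ is a spanning forest of the $S_2$-edges on $W_G\setminus U$.
\item Bound $|W_G\setminus U|$ by Gallai's lemma, using an explicit vertex cover of size $|E_0|+|E_1\setminus S_1|+|E_2\setminus S_2|+|E(F)|$. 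This cover is valid because no $S_1$ edge can survive inside $W_G\setminus U$.
\end{itemize}
Weighing the factor $2$ from the zero block against the factor $1$ in the cover gives $\eta\le\alpha(G)+|E|-|U|-|E(F)|$. Without an argument of this kind, your proposal establishes only the easy direction of \cref{eq:222toMIS}.
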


We group $\probdec{220}$ and $\probdec{222}$ together because their associated hardness proofs follow a nearly identical outline. In each case, we will use their solutions to solve the max independent set size problem for graphs. That is, given a graph $G$, our reduction will produce a graph state $\ket\psi$ such that the solution to NSPS for the state $\ket\psi$ allows us to compute $\alpha(G)$, the independence number of $G$. Somewhat surprisingly, this construction (which is given in \Cref{sec:construction}) produces identical graph states for both the $\probdec{220}$ and $\probdec{222}$ problems. In fact, we will treat the $\NP$-hardness proof of $\probdec{220}$ as a warm up to the hardness proof for $\probdec{222}$ since the latter problem must consider strictly more measurement bases.

Since our construction produces a graph state, it is worth taking some time to phrase $\probdec{220}$ and $\probdec{222}$ in more graph-theoretic language. To do this, let us first set some terminology. Let $G = (V,E)$ be a simple graph. For any subset $W \subseteq V$, we use $G[W]$ for the induced subgraph on the vertices $W$. We will use the graph and its adjacency matrix interchangeably, so that, e.g., $\rank(G)$ is the rank of the adjacency matrix (over $\mathbb F_2$). Define the nullity of $G$ as $\eta(G) := |V| - \rank(G)$.

\begin{lemma}
	\label{lem:NSPS_to_nullity}
	Let $\ket\psi$ be the graph state of $G = (V,E)$. Then
	\begin{align}
		-\log_2 m(\psi,\pset{220}) & =\abs{V} - \max_{W \subseteq V} \eta(G[W]) \label{eq:nullity220}                         \\
		-\log_2 m(\psi,\pset{222}) & =\abs{V} - \max_{W \subseteq V, \; D \in \diag{|W|}} \eta(G[W] + D)\label{eq:nullity222}
	\end{align}
	where $\diag{m}$ is the set of diagonal $m \times m$ matrices over $\mathbb F_2$.
\end{lemma}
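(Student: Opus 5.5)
Every projector in $\Pi_{(220),n}$ or $\Pi_{(222),n}$ is a product of single-qubit projectors; on each qubit it picks a basis (from $\{Z,X\}$, or $\{Z,X,Y\}$) and then an outcome. So I will fix a choice of bases, reduce to the computational-basis case by local Cliffords, and then maximize over the outcome with \cref{lem:200_UB}.

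Fix a subset $W\subseteq V$ of qubits to be measured in the $X$ basis, with the remaining qubits $V\setminus W$ measured in $Z$. Apply $H$ to the qubits in $W$. By \cref{eq:symmetry_optimization_fn}, the best fidelity over projectors with this basis pattern equals $m(H_W\ket\psi,\pset{200})$. By \cref{lem:200_UB} this is $2^{-r}$, where $r$ is the rank of the commutativity ($X$-part) matrix of $H_W\ket{G}$. The graph state generators are $g_j = X_j\prod_{i\in N(j)}Z_i$, so their check matrix is $[I\,|\,A]$ with $A=\mathrm{adj}(G)$. Conjugating by $H$ on $W$ swaps the $X$ and $Z$ columns for $W$. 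The new $X$-part $M$ therefore has columns $e_j$ for $j\notin W$ and $A_{\cdot,j}$ for $j\in W$.

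Up to permuting rows and columns, $M=\begin{pmatrix} I & A_{\bar W,W}\\ 0 & A_{W,W}\end{pmatrix}$. Column reduction using the identity block clears $A_{\bar W,W}$, which gives $\rank M = |V\setminus W| + \rank(G[W]) = |V|-\eta(G[W])$. Maximizing over $W$ yields \cref{eq:nullity220}, since every projector in $\Pi_{(220),n}$ arises from some $W$ and some outcome.

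For $\pset{222}$, let $W$ now denote the qubits measured in $X$ or $Y$, with $D$ the diagonal indicator of those measured in $Y$. Measuring in $Y$ is equivalent to applying $\phasegate^{\dagger}$ and then measuring in $X$; which of $\phasegate$ or $\phasegate^{\dagger}$ is used only matters up to a Pauli, and Paulis leave the $\pset{200}$ value unchanged. In the binary symplectic picture this adds the $X$-column to the $Z$-column on that qubit. Since the $X$-column of $G$'s check matrix is $e_j$, the qubit's $Z$-column becomes $A_{\cdot,j}+e_j$. Following this with $H$ on all of $W$ gives an $X$-part with columns $e_j$ for $j\notin W$ and $(A+D)_{\cdot,j}$ for $j\in W$. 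The same block reduction then gives rank $|V|-\eta(G[W]+D)$, and maximizing over $(W,D)$ gives \cref{eq:nullity222}.

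The main point needing care is the bookkeeping of how $H$ and $\phasegate$ act on the check matrix; in particular I must verify that the $Y$ case produces exactly $A+D$ restricted to $W$, and not a transpose or a row-side change. This holds because $A$ is symmetric and the rank is computed on columns. I also need to note that \cref{lem:200_UB} is stated for $k=n$, which is exactly the case used here.
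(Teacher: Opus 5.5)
Your proposal is correct and follows essentially the same route as the paper: rotate each $X$- or $Y$-measured qubit to the computational basis by $H$ (preceded by a phase gate for $Y$), read off the new $X$-part columns $e_j$, $a_j$ or $a_j+e_j$ from the tableau $[I \mid A]$, and apply \cref{lem:200_UB} to the block-triangular matrix with an identity block to get rank $|V\setminus W| + \rank(A[W]+D)$. Your explicit use of \cref{eq:symmetry_optimization_fn} with $H_W$ and your check that the choice between the phase gate and its inverse is irrelevant make the paper's argument slightly more careful, but the structure is the same.
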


In other words, solutions to the $\probdec{220}$ and $\probdec{222}$ problems correspond to solutions to nullity maximization problems over graphs. For  $\probdec{220}$, we want to maximize the nullity of an induced subgraph of $G$. For $\probdec{222}$, we want to maximize the nullity of an induced subgraph of $G$ where we additionally are allowed to add entries along the diagonal of the adjacency matrix.

\begin{proof}[Proof of \Cref{lem:NSPS_to_nullity}]
	The proof is a straightforward interpretation of the Gottesman--Knill algorithm on the stabilizer tableau of a graph state. We adopt the notation $[C \mid D]$ to refer to the Pauli generators of some stabilizer state with $X$-components given by the matrix $C$ and $Z$-components given by the matrix $D$ (as in Aaronson--Gottesman \cite{aaronson2004improved}). Given a graph state with adjacency matrix $A$, it is well-known that its stabilizer generators have tableau $[ I \mid A ]$, where $I$ is the identity matrix.

	From \cref{lem:200_UB}, we know how to maximize the measurement probability when all qubits are measured in the Pauli-$Z$ basis. To measure a qubit in the $X$ basis (or $Y$ basis), we can first apply a Hadamard gate (or a phase gate followed by a Hadamard gate) to the qubit, then measure in the Pauli-$Z$ basis. Let us consider how the $X$-component of the tableau is modified under these operations:
	\begin{itemize}
		\item Pauli-$Z$ measurement of $j^{\mathrm{th}}$ qubit: the $j^{\mathrm{th}}$ column remains unchanged and corresponds to the standard basis column vector $\mathrm{col}_j(I)=e_j$;
		\item Pauli-$X$ measurement of $j^{\mathrm{th}}$ qubit: the $j^{\mathrm{th}}$ column of $C$ and $D$ are swapped resulting in $j^{\mathrm{th}}$ column of the adjacency matrix $\mathrm{col}_j(A)=:a_j$;
		\item Pauli-$Y$ measurement of $j^{\mathrm{th}}$ qubit: the $j^{\mathrm{th}}$ column of $C$ and $D$ are added resulting in $a_j+e_j$.
	\end{itemize}
	Hence, the maximization of the fidelity over $\pset{220}$ is equivalent
	to the minimization of the rank of the $n \times n$ matrix constructed by choosing either column $e_j$ or $a_j$ for each $j\in [n]$. Similarly, the maximization of the fidelity over $\pset{222}$ is equivalent
	to the minimization of the rank of the $n \times n$ matrix constructed by choosing column $e_j$ or $a_j$ or $a_j+e_j$ for each $j\in [n]$.

	Let the sets $F\subseteq W \subseteq [n]$ denote our choices of measurement basis as follows. Let $W$ be the set of qubits that we measured in either the $X$ or $Y$ basis, and let $F$ be the subset of $W$ for the qubits that we measured in the $Y$ basis. The $\pset{220}$ setting corresponds to a rank minimization over $W$ subject to $F=\emptyset$ while the $\pset{222}$ setting corresponds to a rank minimization over both $F$ and $W$.

	Let us now consider the form of the $X$-component of the final tableau (prior to $Z$ basis measurement) for a given choice of $F$ and $W$. Since the rank of a matrix is invariant under the permutation of its rows and columns, we can apply such permutations to collect all the 1's associated with qubits in $W^c:=[n]\setminus W$ (i.e., the qubits measured in the Pauli-$Z$ basis) to the top left of the matrix. This results in a final matrix of the form:
	\begin{align}
		C(F,W):=\begin{bmatrix} I & J \\0 &A[W]+D(F)
		        \end{bmatrix},
	\end{align}
	where $I$ is an identity matrix of size $\abs{W^c}=n-\abs{W}$, $A[W]$ is the principal submatrix of $A$ where rows and columns in $W^c$ are deleted, $D(F)$ is a diagonal matrix of size $\abs{W}$ with the diagonal entries in $D(F)$ set to $1$ if the associated qubit is in $F$ and otherwise set to $0$. The matrix $J$ is unimportant for this proof but is associated with edges of $G$ between a qubit in $W$ and a qubit in $W^c$.

	Hence:
	\begin{align}
		-\log_2 m(\psi,\pset{222}) & =\min_{F \subseteq W \subseteq [n]} \rank C(F,W)                               \\
		                           & =\min_{F \subseteq W \subseteq [n]} (\abs{W^c}+ \rank \left[ A[W]+D(F)\right]) \\
		                           & =\min_{F \subseteq W \subseteq [n]} (n-\abs{W}+\rank \left[ A[W]+D(F)\right])  \\
		                           & =n+\min_{F \subseteq W \subseteq [n]} -\eta\left( A[W]+D(F)\right)             \\
		                           & =n-\max_{F \subseteq W \subseteq [n]} \eta\left( A[W]+D(F)\right).
	\end{align}
	The \cref{eq:nullity222} claim is immediate. From this, we also attain the \cref{eq:nullity220} claim by enforcing $F=\emptyset$ (disallowing $Y$ measurements).
\end{proof}

\subsubsection{The graph construction}
\label{sec:construction}

We will use the same reduction for each problem from independent set.

Let $G = (V, E)$ be a simple graph. We will construct a new graph $H$ by replacing each edge $e = (u, v) \in E$ with a path of length 3. That is, for each edge $e \in E$, we introduce two \emph{internal} vertices $p_e$ and $q_e$, replacing $e$ with the edges $(u, p_e)$, $(p_e, q_e)$, and $(q_e, v)$:

\begin{center}
	\begin{tikzpicture}[
			vertex/.style={circle, fill=black, inner sep=1.5pt, outer sep=0pt},
			dangling/.style={thin, dashed, gray},
			edge/.style={thin}
		]
		\newcommand{\edgelen}{1.5}

		\node[vertex, label=below:$u$] (u1) at (0,0) {};
		\node[vertex, label=below:$v$] (v1) at (\edgelen,0) {};

		\draw[edge] (u1) -- (v1) node[midway, above] {$e$};

		\draw[dangling] (u1) -- +(-0.8, 0.6);
		\draw[dangling] (u1) -- +(-1.0, 0);
		\draw[dangling] (u1) -- +(-0.8, -0.6);

		\draw[dangling] (v1) -- +(0.8, 0.6);
		\draw[dangling] (v1) -- +(1.0, 0);
		\draw[dangling] (v1) -- +(0.8, -0.6);

		\node[vertex, label=below:$u$] (u2) at (6.5,0) {};
		\node[vertex, label=below:$p_e$] (pe) at ({6.5+\edgelen},0) {};
		\node[vertex, label=below:$q_e$] (qe) at ({6.5 + 2*\edgelen},0) {};
		\node[vertex, label=below:$v$] (v2) at ({6.5 + 3*\edgelen},0) {};

		\draw[edge] (u2) -- (pe) -- (qe) -- (v2);

		\draw[dangling] (u2) -- +(-0.8, 0.6);
		\draw[dangling] (u2) -- +(-1.0, 0);
		\draw[dangling] (u2) -- +(-0.8, -0.6);

		\draw[dangling] (v2) -- +(0.8, 0.6);
		\draw[dangling] (v2) -- +(1.0, 0);
		\draw[dangling] (v2) -- +(0.8, -0.6);

		\draw[-stealth, shorten <=1.6cm, shorten >=1.6cm] (v1) -- (u2);

	\end{tikzpicture}
\end{center}

Let $P = \{ p_e : e \in E \} \cup \{q_e : e \in E\}$ be the set of these extra path vertices. That is, the vertex set of $H$ is $V(H) = V \cup P$.

\begin{lemma}
	\begin{align*}
		\max_{W \subseteq V(H)} \eta(H[W]) = \max_{W \subseteq V(H), \; D \in \diag{|W|}} \eta(H[W] + D) = \alpha(G) + |E|.
	\end{align*}
\end{lemma}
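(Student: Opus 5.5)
The proof splits into a lower bound and an upper bound. Since the maximum with diagonal perturbations is at least the one without (take $D=0$), it suffices to show
\[
\max_{W}\eta(H[W])\ \ge\ \alpha(G)+|E|
\quad\text{and}\quad
\max_{W,D}\eta(H[W]+D)\ \le\ \alpha(G)+|E|.
\]

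\emph{Lower bound.} Fix a maximum independent set $I$ of $G$. For each edge $e=(u,v)$, choose one internal vertex as follows. If $u\in I$, take $q_e$, the internal vertex not adjacent to $u$. If $v\in I$, take $p_e$. If neither endpoint is in $I$, take either one. Let $W$ consist of $I$ together with these $|E|$ chosen internal vertices. Then $H[W]$ has no edges, because:
\begin{itemize}
\item $I$ is independent in $G$, hence in $H$;
\item at most one internal vertex is chosen per edge gadget, so no path edge $p_e q_e$ survives;
\item each chosen internal vertex has its only $V$-neighbour outside $I$.
\end{itemize}
So the adjacency matrix of $H[W]$ is zero, and $\eta(H[W])=|W|=\alpha(G)+|E|$.

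\emph{Upper bound.} Fix $W$ and $D$, and write $M=H[W]+D$, $A=W\cap V$, and $k_e=|W\cap\{p_e,q_e\}|\in\{0,1,2\}$. I would use two standard facts about symmetric matrices over $\mathbb F_2$.
\begin{itemize}
\item Deleting one row and column lowers the nullity by at most one: $\eta(M)\le \eta(M-x)+1$.
\item Schur complements. When $k_e=2$, the $2\times 2$ block on $\{p_e,q_e\}$ is $\begin{bmatrix} d&1\\1&d'\end{bmatrix}$. The goal is to eliminate it exactly, removing two vertices and two units of rank, so nullity is unchanged. Its effect on the remaining matrix is confined to the entries indexed by $u,v$ (intersected with $A$). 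The off-diagonal $(u,v)$ entry of the update is forced to be $1$, while the diagonal updates depend on $d,d'$. This elimination requires the block to be invertible, i.e.\ $\det=dd'+1\neq 0$; if $d=d'=1$ the block is singular and that case must be handled separately.
\end{itemize}
Carrying this out, I would first eliminate every gadget with $k_e=2$ for which the block is invertible. Each such gadget contributes $2$ vertices and $0$ nullity, while in the target bound $\alpha(G)+|E|$ each gadget is allotted $1$. This slack of one unit per such gadget is what pays for the rest. Next I would delete the internal vertices of gadgets with $k_e=1$, costing at most one unit of nullity each. Gadgets with $k_e=0$ cost nothing and leave a slack of one. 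What remains is a symmetric matrix on $A$ whose off-diagonal support is exactly the edge set $E_2$ of $G[A]$ formed by the gadgets with $k_e=2$, and whose diagonal is arbitrary.

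\emph{Main obstacle.} The remaining matrix on $A$ must have nullity at most $\alpha(G)+\#\{e: k_e=2\}$, with the extra slack available from gadgets having $k_e=0$. The naive statement "nullity $\le\alpha(G[A],E_2)$" is false. For example, a triangle with all diagonal entries $1$ gives the all-ones matrix, which has nullity $2>1$. So the slack from eliminating $k_e=2$ gadgets must be spent here. I would try to prove
\[
\eta(N)\ \le\ \alpha(G[A])+|E_2|
\]
by induction on $|E_2|$. The step would take an edge $uv\in E_2$ and delete $u$, losing at most one unit of nullity and charging it to that edge, until no edges remain. In the edgeless case the nullity is at most $|A'|\le\alpha$ of the surviving induced subgraph. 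The delicate point is that deleting vertices changes which vertices lie in $A$, and the surviving set must still be independent in $G$ and not merely in $(A,E_2)$. Edges of $G[A]$ with $k_e\le 1$ therefore also need a deletion, paid for by their own gadget's slack; this is why $k_e=1$ gadgets must be handled carefully (delete an endpoint in $A$ instead of the internal vertex when that endpoint exists). Finally, the singular-block case $d=d'=1$ needs its own check. I expect this bookkeeping, making the slack from each of the three gadget types cover exactly the vertex deletions needed to reach an independent set of $G$, to be the crux of the proof.
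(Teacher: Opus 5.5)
Your lower bound, and your Schur-complement elimination of invertible $k_e=2$ blocks, are correct. The upper bound, however, is not proved. The step you identify as the crux is left open, and the repair you suggest for $k_e=1$ gadgets does not work.

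\emph{The $k_e=1$ gadgets.} Deleting their internal vertices spends each such gadget's entire unit. After that, the inequality $\eta(N)\le\alpha(G[A])+|E_2|$ you want is false. Take $G$ a single edge $uv$, $W=\{u,v,p_e\}$ and $D=0$: you get $N=0_{2\times 2}$ with $\eta(N)=2>1$, although $\eta(H[W])=1$. Deleting an endpoint of $e$ instead does not help. That deletion costs one unit, and the zero-diagonal vertex $p_e$, if bounded crudely, costs another, so the gadget pays twice.

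The missing idea is this. A zero-diagonal internal vertex $p$ whose only neighbour is $u\in A$ forms with $u$ the invertible principal block $\left(\begin{smallmatrix} x&1\\1&0\end{smallmatrix}\right)$. The Schur complement of this block leaves the rest of the matrix \emph{unchanged}, so $u$ disappears at no cost and the gadget's unit is kept as slack. The paper encodes exactly this in the inequality $\operatorname{rank}\left(\begin{smallmatrix} X&B\\B^T&0\end{smallmatrix}\right)\ge 2\operatorname{rank}(B)$. With $U\subseteq W_G$ the endpoints adjacent to chosen zero-diagonal $E_1$ vertices, this gives $\eta\le |W_G|+|E_1|-2|U|$. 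No $E_1$ edge survives in $G[W_G\setminus U]$, so Gallai's lemma gives $|W_G\setminus U|\le\alpha(G)+|E|-|E_1|$, which finishes the bound.

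\emph{The singular case $d=d'=1$.} You defer this case, but it is where the $\mathcal S_{(222)}$ argument needs genuinely new input. Such a block reduces to a decoupled diagonal $1$ plus a zero-diagonal vertex $z_e$ whose column is the incidence vector of $e$ on $A$. When both endpoints $u,v$ lie in $A$, pairing $z_e$ with $u$ is no longer harmless: the Schur complement adds the rest of $u$'s row and column to $v$'s. The remaining matrix is then no longer supported on edges of $G$, and your independence bookkeeping breaks.

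The paper resolves this globally. It shows $\operatorname{rank}(B_0\ \Delta)=|U|+|E(F)|$ for a spanning forest $F$ of the singular edges on $W_G\setminus U$. It then covers these edges by all but one vertex of each tree, i.e.\ by $|E(F)|$ vertices, which the rank gain of $2|E(F)|$ more than pays for.

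Alternatively, once you have the pairing trick, your deletion framework can handle it. Delete $u$, charging the cost to $e$. Then $z_e$ is a pendant of $v$ and can be paired off for free. As written, though, your proposal names the difficulty without supplying either argument.
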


We show that the maximum is achieved by choosing $W$ such that:
\begin{itemize}
	\item $W \cap V$ is a maximum independent set for $G$ and
	\item $W\setminus V$ contains exactly one vertex from each pair $(p_e,q_e)$ for all $e\in E$ subject to the constraint that no vertex from $W \cap V$ is adjacent to a vertex from $W\setminus V$.
\end{itemize}
This is always possible since the independent set requirement ensures that no edge in $E$ has both endpoints in $W \cap V$. By construction, this choice of $W$ is an independent set of $H$ of size $\alpha(G) + |E|$. We proceed by first showing that $\alpha(G) + |E|$ lower bounds the maximization above since the above choice of $W$ achieves this nullity. Then, we conclude by showing that all choices of $W$ must result in a nullity that is at most $\alpha(G) + |E|$.

The lower bound in each case is immediate. Since the adjacency matrix of an independent set is the all-zeros matrix, the nullity is equal to the dimension, i.e., $\abs{W}=\alpha(G)+\abs{E}$.

\subsubsection{\texorpdfstring{$\probdec{220}$}{NSPS(220)}}

We must show that $\eta(H[W]) \le \alpha(G) + |E|$ for any induced subgraph $H[W]$. We start by partitioning the edges of $G$ into three sets based on their internal vertices in $W$: for each $i \in \{0,1,2\}$,
\[
	E_i = \{ e \in E : |\{p_e, q_e\} \cap W| = i \}
\]
Let $W_G = W \cap V$ be the vertices in $H[W]$ that are also in $G$. Using these definitions, we get that the total number of vertices in $H[W]$ is $|W| = |W_G| + |E_1| + 2|E_2|$.

Much of the analysis follows by a sequence of row/column reductions on the adjacency matrix for $H[W]$, so let us pick an order for the vertices $W$:
\begin{enumerate}[itemsep = 0pt]
	\item The vertices in $W_G$, followed by
	\item The internal vertices from $E_1$, followed by
	\item The internal vertices from $E_2$ with adjacent vertices next to each other
\end{enumerate}
In other words, the adjacency matrix of $H[W]$ takes the block form:
\[
	\begin{pmatrix} 0 & B & C  \\ B^T & 0 & 0 \\ C^T & 0 & D  \end{pmatrix}
\]
where $B$ and $C$ are arbitrary, and $D$ is a $2|E_2| \times 2|E_2|$ block matrix with blocks $\left( \begin{smallmatrix} 0 & 1 \\ 1 & 0 \end{smallmatrix} \right)$ along the diagonal reflecting the edges between neighboring vertices of $E_2$. The other zero submatrices appear because $V$ is an independent set in $H$ and the vertices of $E_1$ are isolated from each other.

Importantly, the matrix $D$ is invertible (and, in fact, involutory $D^2 = I$), so we can reduce the matrix into two principal submatrices through some simple row/column reductions:
\[
	\begin{pmatrix} I & 0 & C D \\ 0 & I & 0 \\ 0 & 0 & I \end{pmatrix}
	\begin{pmatrix} 0 & B & C \\ B^T & 0  & 0 \\ C^T & 0 & D \end{pmatrix}
	\begin{pmatrix} I & 0 & 0 \\ 0 & I & 0 \\  D C^T & 0 & I \end{pmatrix}
	= \begin{pmatrix} C D C^T & B & 0 \\ B^T & 0 & 0 \\ 0 & 0 & D \end{pmatrix}
\]

In other words, we can compute the rank of the adjacency matrix of the induced graph $H[W]$ as the sum of the ranks of the two principal submatrices:
\[
	\rank(H[W]) = \rank\begin{pmatrix} C D C^T & B \\ B^T & 0 \end{pmatrix}  + \rank D = \rank\begin{pmatrix} C D C^T & B \\ B^T & 0 \end{pmatrix}  + 2|E_2| \ge  2 \rank(B) + 2 |E_2|.
\]
Interestingly, $C D C^T$ is the adjacency matrix of a subgraph of $G$ with vertex set $W_G$ and edge set $E_2$. This may give some intuition for the construction since choosing two neighboring internal vertices in $E_2$ (which, by definition, will not be part of an independent set) will recreate the corresponding edge of $G$ in $C D C^T$, which affects the rank. That said, we will not actually need or use this interpretation.

Our rank calculation gives a straightforward bound on the nullity of $H[W]$:
\[
	\eta(H[W]) = |W| - \rank(H[W]) \le (|W_G| + |E_1| + 2|E_2|) - (2\rank(B) + 2|E_2|)
\]
simplifying to
\[
	\eta(H[W]) \le |W_G| + |E_1| - 2\operatorname{rank}(B).
\]
Recall that the matrix $B$ has $|E_1|$ columns, where each column corresponds to a single internal vertex from an edge in $E_1$. Since this internal vertex connects to exactly one endpoint $v \in V$, its column is either the standard basis vector (if $v \in W_G$) or the zero vector (if $v \notin W_G$). Therefore, the rank of $B$ is exactly the number of \emph{distinct} basis vectors it contains. Let $U \subseteq W_G$ be this set of distinct vertices, so $\rank(B) = |U|$. Ultimately, we will show that the nullity is maximized when $U = \emptyset$, reflecting the fact that an independent set is the best choice.

To that end, let's substitute $|W_G| = |W_G \backslash U| + |U|$ into our bound for the nullity:
\[
	\eta(H[W]) \le (|W_G \backslash U| + |U|) + |E_1| - 2|U| = |W_G \backslash U| + |E_1| - |U|
\]
and so, our goal is to now bound the size of $W_G \backslash U$. Our plan will be to use Gallai's lemma to relate the size $W_G \backslash U$ to the size of its largest independent set. In particular, for the induced subgraph $G[W_G \backslash U]$, the sum of its independence number $\alpha(W_G \backslash U)$ and its vertex cover number $\tau(W_G \backslash U)$ equals its vertex count:
\[
	|W_G \backslash U| = \alpha(W_G \backslash U) + \tau(W_G \backslash U).
\]
Clearly, the max independent set size of an induced subgraph is at most the max independent set of the full graph: $\alpha(W_G \backslash U) \le \alpha(G)$ (which, recall, is part of the target bound on the nullity). To bound the size of the minimum vertex cover, we will use a standard trick---the size of the minimum vertex cover is at most the number of edges since we can choose one vertex of each edge to be part of the cover. This immediately gives the bound $\tau(W_G \backslash U) \le |E(W_G \backslash U)| \le |E|$.

Unfortunately, this bound isn't quite tight enough for our needs. Thankfully, we can get a tighter bound using the fact that $U$ contains all those vertices of $W_G$ that are adjacent to an internal vertex from $E_1$. Consequently, it is impossible for an edge in $E_1$ to have both of its endpoints in $W_G \backslash U$ (since the endpoint adjacent to the internal vertex is either in $U$ or not in $W_G$). It follows that:
\[
	|E(W_G \backslash U)| \le |E| - |E_1|
\]
Plugging back into Gallai's lemma, we get
\[
	|W_G \backslash U| = \alpha(W_G \backslash U) + \tau(W_G \backslash U) \le \alpha(G) + |E| - |E_1|
\]
and, substituting into our bound for the nullity yields
\[
	\eta(H[W]) \le (\alpha(G) + |E| - |E_1|) + |E_1| - |U| = \alpha(G) + |E| - |U| \le \alpha(G) + |E|
\]
which concludes the proof.

\subsubsection{\texorpdfstring{$\probdec{222}$}{NSPS(222)}}

As stated previously, the reduction for $\probdec{222}$ is identical. However, the analysis is more involved. If not otherwise stated, we adopt all of the same notation from that proof. Adopting the same ordering on the vertices, we now seek to understand the rank of the matrix
\[
	A := H[W] + \mathsf{Diag} = \begin{pmatrix} 0 & B & C  \\ B^T & 0 & 0 \\ C^T & 0 & D  \end{pmatrix} + \mathsf{Diag}
\]
where $\mathsf{Diag}$ is an arbitrary diagonal matrix. The extra diagonal matrix will require us to subdivide the vertices in $E_1$ and $E_2$ by the values along the diagonal. For each internal vertex from $E_1$, there is only one binary choice---either there is a 1 along the diagonal or there is not. For each pair of internal vertices in $E_2$, there are now $4$ possible principal submatrices
\begin{itemize}
	\item Non-invertible: $\left(\begin{smallmatrix} 1 & 1 \\ 1 & 1 \end{smallmatrix} \right)$
	\item Invertible: $\left(\begin{smallmatrix} 0 & 1 \\ 1 & 0 \end{smallmatrix} \right)$, $\left(\begin{smallmatrix} 1 & 1 \\ 1 & 0 \end{smallmatrix} \right)$, $\left(\begin{smallmatrix} 0 & 1 \\ 1 & 1 \end{smallmatrix} \right)$
\end{itemize}

For $i = 1,2$, we define the \emph{singular} edges $S_i \subseteq E_i$ to be the edges in $E_i$ that have non-invertible blocks on the diagonal. The vertices are now ordered starting with $W_G$ followed by the internal vertices from $S_1$, $E_1 \backslash S_1$, $S_2$, and $E_2 \backslash S_2$. In other words, we must analyze a matrix of the form
\[
	\begin{pmatrix}
		\Lambda & B_0 & B_1 & C_0 & C_1 \\
		B_0^T   & 0   & 0   & 0   & 0   \\
		B_1^T   & 0   & I   & 0   & 0   \\
		C_0^T   & 0   & 0   & D_0 & 0   \\
		C_1^T   & 0   & 0   & 0   & D_1
	\end{pmatrix}
\]
where $\Lambda$ is an arbitrary diagonal matrix, $D_0$ is block diagonal with the non-invertible $2 \times 2$ submatrices above, and $D_1$ is block diagonal with the invertible $2 \times 2$ submatrices. Once again, we will use the invertible diagonal entries to row/column reduce our matrix.
So,
\[
	\eta(A) = \eta
	\begin{pmatrix}
		M     & B_0 & C_0 \\
		B_0^T & 0   & 0   \\
		C_0^T & 0   & D_0
	\end{pmatrix}
\]
where $M = \Lambda + B_1 B_1^T + C_1 D_1^{-1} C_1^T$. Recall that $D_0$ is block diagonal with the blocks $\left(\begin{smallmatrix} 1 & 1 \\ 1 & 1 \end{smallmatrix} \right)$. Let's focus on a principal submatrix corresponding to one of the blocks corresponding to the edge $e$:
\[
	\begin{pmatrix}
		M     & a_e & b_e \\
		a_e^T & 1   & 1   \\
		b_e^T & 1   & 1
	\end{pmatrix}
\]
where $a_e, b_e$ are column vectors of length $|W_G|$. Once again, our plan will be to isolate a principal submatrix of full rank that can be removed:
\[
	\begin{pmatrix}
		I & 0 & b_e \\
		0 & 1 & 1   \\
		0 & 0 & 1
	\end{pmatrix}
	\begin{pmatrix}
		M     & a_e & b_e \\
		a_e^T & 1   & 1   \\
		b_e^T & 1   & 1
	\end{pmatrix}
	\begin{pmatrix}
		I     & 0 & 0 \\
		0     & 1 & 0 \\
		b_e^T & 1 & 1
	\end{pmatrix}
	=
	\begin{pmatrix}
		M + b_e b_e^T & a_e + b_e & 0 \\
		a_e^T + b_e^T & 0         & 0 \\
		0             & 0         & 1
	\end{pmatrix}
\]
Note that $a_e + b_e$ exactly encodes the incidence between the vertices in $W_G$ and the edge $e$.

To summarize, after once again removing the full rank submatrices, we get
\[
	\eta(A) = \eta
	\begin{pmatrix}
		\widetilde{M} & B_0 & \Delta \\
		B_0^T         & 0   & 0      \\
		\Delta^T      & 0   & 0
	\end{pmatrix}
\]
where $\widetilde{M} = M + \sum_e b_e b_e^T$ and $\Delta$ is the incidence matrix for the vertices $W_G$ and edges $S_2$.

We are now in position to repeat large parts of the earlier proof, first, by bounding the rank of the matrix above:
\[
	\rank\begin{pmatrix}
		\widetilde{M} & B_0 & \Delta \\
		B_0^T         & 0   & 0      \\
		\Delta^T      & 0   & 0
	\end{pmatrix} \ge  2 \rank(\begin{pmatrix} B_0  & \Delta \end{pmatrix})
\]
so that
\[
	\eta(A) \le |W_G| + |S_1| + |S_2| - 2 \rank(\begin{pmatrix} B_0  & \Delta \end{pmatrix})
\]
To bound the rank we analyze the columns of $B_0$ and $\Delta$. Once again, we will construct a set of vertices $U \subseteq W_G$ such that its corresponding submatrix in $(B_0 \;\; \Delta)$ has full rank. Recall that each column of $B_0$ is either the all-zeroes vector or a standard basis vector.  Similarly, each column of $\Delta$ has zero, one, or two $1$'s in it.  Add all the vertices in $W_G$ corresponding to one of the standard basis columns to $U$. After removing these rows from $(B_0 \;\; \Delta)$, there may be a new set of columns that become standard basis vectors, so we add them to $U$ as well, and repeat. That is, we perform Gaussian elimination on the columns of $(B_0 \;\; \Delta)$ using standard basis vectors as the pivots until each column is either the all-zeros vector or has two $1$'s. The columns with two $1$'s correspond to edges in $S_2$ with both endpoints in $W_G \backslash U$.

So, in summary $\rank(B_0  \; \Delta)$ is $|U|$ plus the rank of the incidence matrix of $W_G \backslash U$ with the edges in $S_2$. We claim that the rank of this incidence matrix is the number of edges in any spanning forest $F$ of the graph with vertex set $W_G  \backslash U$ and edge set $S_2$. To see this, it suffices to note that any column corresponding to edge (say, $\mathsf{col}(u,v)$) is the sum of the columns forming a path from $u$ to $v$:
\[
	\mathsf{col}(u,v) = \mathsf{col}(u, x_1) + \mathsf{col}(x_1, x_2) + \ldots + \mathsf{col}(x_{k-1}, x_k) + \mathsf{col}(x_k, v)
\]
for edges $(x_i, x_{i+1})$ in the subgraph of vertices in $W_G \backslash U$ with edges in $S_2$. Therefore, we get $\rank(B_0  \; \Delta) = |U| + |E(F)|$.

Plugging back into our nullity bound, we get
\[
	\eta(A) \le |W_G \backslash U| + |U| + |S_1| + |S_2| - 2 ( |U| + |E(F)| ) = |W_G \backslash U| + |S_1| + |S_2| - |U| - 2|E(F)|
\]
Once again, we will use Gallai's lemma to bound $|W_G \backslash U|$, but we will need to be more careful with how we construct the vertex cover. To construct the cover, first notice that the edges of $G[W_G \backslash U]$ can be partitioned into four sets: the edges in $E_0$, the edges in $E_1 \backslash S_1$, the edges in $S_2$, and the edges in $E_2 \backslash S_2$. Recall that no edge in $S_1$ can appear in $G[W_G \backslash U]$.

We choose one vertex from each edge in $E_0$, $E_1 \backslash S_1$, and $E_2 \backslash S_2$ to cover those edges. To cover the edges in $S_2$, we choose all-but-one vertex in each connected component of the spanning forest. Altogether, we get a vertex cover of size
\[
	|E_0| + |E_1 \backslash S_1| + |E_2| - |S_2| + |E(F)|.
\]
Plugging into Gallai's lemma (using $\alpha(W_G \backslash U) \le \alpha(G)$ again), we get
\[
	|W_G \backslash U| = \alpha(W_G \backslash U) + \tau(W_G \backslash U) \le \alpha(G) + |E_0| + |E_1 \backslash S_1| + |E_2| - |S_2| + |E(F)|
\]
so, we get
\begin{align*}
	\eta(A) & \le (\alpha(G) + |E_0| + |E_1 \backslash S_1| + |E_2| - |S_2| + |E(F)|) + |S_1| + |S_2| - |U| - 2|E(F)| \\
	        & \le \alpha(G) + (|E_0| + |E_1| + |E_2|) - |U| - |E(F)|                                                  \\
	        & = \alpha(G) + |E| - |U| - |E(F)|                                                                        \\
	        & \le \alpha(G) + |E|
\end{align*}
completing the proof. \qed

\section*{Statement on AI Use}
Significant parts of the hardness proofs for $\probdec{220}$ and $\probdec{222}$ were found by an interactive prompting of Gemini 3.1 Pro. The proofs were read, understood, and rewritten by the authors, so little to no AI generated text remains. The $\probdec{1{*}{*}}$ hardness results are entirely human created. Gemini 3.1 Pro was also used to search for relevant results on various topics. Claude Opus 5.5 was used to check for correctness, bugs, and typos.

\section*{Acknowledgments}
HP thanks Oliver Reardon-Smith and Kamil Korzekwa for useful discussions during the early stages of this project. HP acknowledges working on this project during previous employments at
Dahlem Center for Complex Quantum Systems, Freie Universitat Berlin, Institute for Quantum Computing, University of Waterloo and Perimeter Institute for Theoretical Physics.
LS is supported under NSERC RGPIN-2025-04875.

\bibliographystyle{plain}
\bibliography{bibliography}

\appendix
\crefalias{section}{appendix}

\section{\textsc{NSPS} Variant Definitions and Reductions}
\label{app:FormalDefin}

We define the decision and search problem variants associated with the optimization problem from \cref{defn:probnum}.

\begin{definition}[$\probdec{\stab{n},\sscase}$]\label{defn:probdec}
	For $\sscase\subseteq \sall$ such that $\sscase \neq \emptyset$, we define the decision problem $\probdec{\stab{n},\sscase}$ as follows:

	Given any state $\ket{\psi}\in \stab{n}$, $k\in [n]$ and $p\in\bbn$, decide if:
	\begin{align*}
		 & \pi_{YES}: m({\psi},\Pi_{\sscase,k}) \geq 2^{-p+1} \\
		 & \pi_{NO}: m({\psi},\Pi_{\sscase,k}) \leq 2^{-p}.
	\end{align*}
\end{definition}

\begin{definition}[$\probproj{\stab{n},\sscase}$]\label{defn:probproj}
	For $\sscase\subseteq \sall$ such that $\sscase \neq \emptyset$, we define the search problem $\probproj{\stab{n},\sscase}$ as follows:

	Given any state $\ket{\psi}\in \stab{n}$ and $k\in [n]$, output $\pi \in \Pi_{\sscase,k}$ satisfying:
	\begin{align}\label{eq:opt_projector}
		\twonorm{\pi \ket{\psi}}^2= m({\psi},\Pi_{\sscase,k})
	\end{align}
\end{definition}

We show that for any fixed subset of projectors $\sscase$,
given the ability to access at most $\poly(n)$ solutions to desired instances of any one of the problems $\probdec{\stab{n},\sscase}$, $\probnum{\stab{n},\sscase}$ or $\probproj{\stab{n},\sscase}$, one can solve any given instance of the other two variants. We state this more concisely by using the following notation: for two problems $A$ and $B$, we write $A \reducesto B$ to denote that there exists a polynomial-time Turing reduction from $A$ to $B$ and hence solving $A$ is possible in $\poly(n)$ time with access to a solver for $B$.

\begin{lemma}
	For any fixed $\sscase\subseteq \sall$, $\sscase\neq \emptyset$, all of the following hold:
	\begin{align}
		\probproj{\stab{n},\sscase} & \reducesto \probnum{\stab{n},\sscase}, \label{eq:reduction_proj-num}  \\
		\probnum{\stab{n},\sscase}  & \reducesto \probdec{\stab{n},\sscase}, \label{eq:reduction_num-dec}   \\
		\probdec{\stab{n},\sscase}  & \reducesto \probproj{\stab{n},\sscase}. \label{eq:reduction_dec-proj}
	\end{align}
\end{lemma}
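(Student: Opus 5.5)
We prove the three reductions separately. Throughout we use the fact noted after \cref{defn:probnum}: $m(\psi,\Pi_{\sscase,k})$ is either $0$ or $2^{-p}$ with $0\le p\le k$, so it takes only $k+2$ possible values.

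\textbf{Optimization reduces to decision (\cref{eq:reduction_num-dec}).} We query the decision oracle on $(\psi,k,p)$ for $p=1,\dots,k+1$.
\begin{itemize}
\item Since $m$ is restricted to the values $\{0\}\cup\{2^{-q}\}$, the promise is always satisfied.
\item A YES answer at $p$ means $m\ge 2^{-p+1}$.
\item Hence the smallest $p$ that gives YES determines $m=2^{-p+1}$.
\item If no query answers YES, then $m\le 2^{-k-1}$, which forces $m=0$.
\end{itemize}
This uses $k+1$ queries, and binary search would reduce it to $O(\log k)$.

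\textbf{Decision reduces to search (\cref{eq:reduction_dec-proj}).} This is the simplest case. We call the search oracle once to obtain an optimal $\pi$. We then compute $\|\pi\ket{\psi}\|_2^2$ in polynomial time by Gottesman--Knill, since $\pi$ is a product of stabilizer projectors and the overlap is computable via the tableau. Finally we compare the value against $2^{-p+1}$.

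\textbf{Search reduces to optimization (\cref{eq:reduction_proj-num}).} We use self-reducibility, fixing one qubit at a time.
\begin{enumerate}
\item Compute $m^*=m(\psi,\Pi_{\sscase,k})$ with one oracle call. If $m^*=0$, every $\pi$ is optimal and we output any one.
\item Otherwise, for qubit $1$, try each $s\in\sscase$; there are at most $6$ choices. Form $\ket{\psi_s}:=s_1\ket{\psi}/\|s_1\ket{\psi}\|$ whenever the norm is nonzero.
\item The identity $m(\psi,\Pi_{\sscase,k})=\max_s \|s_1\psi\|^2\, m(\psi_s,\Pi'_{\sscase,k-1})$ holds, where $\Pi'$ acts on qubits $2,\dots,k$ of the (still $n$-qubit) state $\ket{\psi_s}$.
\item Each $\ket{\psi_s}$ is again an $n$-qubit stabilizer state. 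Its tableau is computable by the standard measurement update with outcome postselection, and its norm is computable too.
\item Pick an $s$ achieving $m^*$ in the identity above and recurse on $\ket{\psi_s}$, with the target set to $m^*/\|s_1\psi\|^2$.
\end{enumerate}

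The main technical point is that the oracle is stated for $\Pi_{\sscase,k}$, which always acts on the \emph{first} $k$ qubits. For the recursion we need projectors on qubits $2,\dots,k$. We handle this with qubit-order independence via swap Cliffords (\cref{eq:symmetry_optimization_fn}): we move qubit $1$ to the end. This is valid because, once qubit $1$ has been projected onto $s$, the state is a product $\ket{s}\otimes\ket{\phi}$ up to reordering, so leaving that qubit unprojected does not change the optimum. Alternatively, we can trace out qubit $1$ and pad with a fresh qubit.

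The step needing the most care is checking the identity in step 3. It holds because, for every product projector $\pi=s_1\otimes\pi'$,
\begin{align*}
\|\pi\psi\|^2=\|s_1\psi\|^2\,\|\pi'\psi_s\|^2 .
\end{align*}
Each stage uses $O(1)$ oracle calls, so the reduction makes $O(k)$ calls overall.
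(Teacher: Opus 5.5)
Your proposal is correct and takes essentially the same route as the paper. Like the paper, you use a threshold search over the $k+2$ possible values of $m$ (the paper uses binary search, which you also mention), a single search-oracle call plus Gottesman--Knill verification for the decision problem, and a qubit-by-qubit self-reduction for the search problem (the paper's walk down the $q$-ary tree along a path of constant $m$-value). The only cosmetic difference is that the paper avoids your swap/reindexing step by querying $m(\psi_x,\Pi_{\sscase,k})$ again on the partially projected, unnormalized state, since qubits already fixed to a projector in $\sscase$ contribute a factor of $1$ at the optimum.
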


\begin{proof}
	\cref{eq:reduction_dec-proj} is immediately implied by the Gottesman--Knill theorem.
	To show \cref{eq:reduction_proj-num} and \cref{eq:reduction_num-dec}, let us fix $\sscase \subseteq \sall$ with $\abs{S}=q$ and assume we are interested in solving some instance $\psi_{\emptyset} \in \stab{n}$ and $k\in [n]$. The case where $q<2$ is trivial so we can assume $q\geq 2$.

	We first show \cref{eq:reduction_num-dec} holds. For $q\geq 2$
	, $- \log_2 m(\psi_{\emptyset}, \Pi_{\sscase,k}):=p^* \in \{ 0, \ldots, k \}$. Our strategy will simply be to do a binary search over the set $\{ 0, \ldots, k \}$ by calling the oracle for $\probdec{\stab{n},\sscase}$ on the same input state but different inputs of $p$ from \cref{defn:probdec}. This requires $\order{\log_2 k}$ calls to the oracle.

	We now show \cref{eq:reduction_proj-num} holds. The solution to the search problem involves a search over $q^{k}$ projectors to find any projector $\pi \in \Pi_{\sscase,k}$ that satisfies \cref{eq:opt_projector}. The projector satisfying this equation may not be unique however, given access to an oracle for $\probnum{\stab{n},\sscase}$, we can conduct a search over the complete $q$-ary tree to find a solution to the search problem. This uses $\order{qk}=\order{k}$ calls to the oracle for the optimization problem. For simplicity of notation, when the state $\ket{\psi'}=a \ket{\psi}$ is unnormalized with $\ket{\psi}$ a normalized stabilizer state and $a\in \bbr$, we will use $m(\psi',\Pi_{\sscase,k})$ to mean $a^2~ m(\psi,\Pi_{\sscase,k})$.

	We label each node in the complete $q$-ary tree with $x\in \set{\emptyset} \cup \bbz_q \cup \bbz_q^2 \cup \ldots \cup \bbz_q^k$ such that the root node has label $x=\emptyset$ and nodes at level $j$ have labels $x\in \bbz_q^j$ assigned in the natural way. That is, the children of node $x=(x_1, \ldots, x_j)$ have labels $(x_1, \ldots, x_j, a)$ for $a\in \bbz_q$. Each node $x$ is associated with an (unnormalized) stabilizer state $\ket{\psi_x}$.
	By labeling the distinct elements of $S$ with $\pi^{(a)}$ for $a\in \bbz_q$, we define all states at level $j$ using
	\begin{align}
		\ket{\psi_x} & :=\pi^{(x_1)}\otimes \ldots \otimes \pi^{(x_j)} \otimes I^{\otimes n-j} \ket{\psi_{\emptyset}} \\
		             & = \pi(x) \ket{\psi_{\emptyset}}.\label{eq:psi_x_form}
	\end{align}

	We now associate with each state $\ket{\psi_x}$ the `$m$-value' $m_x:=m(\psi_x, \Pi_{\sscase,k})$. It is easy to see that as we traverse the tree from root to leaf:
	\begin{enumerate}
		\item the $m$-value associated with the nodes is non-increasing
		\item at each step from parent to one of the children, there is always a choice of child that preserves the $m$-value.
	\end{enumerate}
	Hence, finding the solution of the search problem is only a matter of starting at the root and walking along the tree via any path with constant $m$-value.

	To establish the first property, we simply observe that for any $\pi_{loc} \in \sscase$ and  stabilizer state $\ket{\psi}$, we have $\twonorm{[I^{\otimes j-1} \otimes \pi_{loc} \otimes I^{\otimes n-j}] \ket{\psi}}^2\in \set{0,1/2,1} \cdot \twonorm{\ket{\psi}}^2$.

	To establish the second property, observe that by definition of $m_x$, there exists $\pi\in \Pi_{\sscase,k}$ such that $\twonorm{\pi \ket{\psi_x}}^2=m_x$. By \cref{eq:psi_x_form} and the optimality of $\pi$ we have, $\pi \pi(x)=\pi=\pi^{(x_1)}\otimes \ldots \otimes \pi^{(x_k)} \otimes I^{\otimes n-k}$ for some choice of $x_{j+1},\ldots, x_k$. Since the value of $x_{j+1}\in \bbz_q$ corresponds to one of the children of node $(x_1, \ldots, x_j)$ we have $m_{(x_1, \ldots, x_j)}=m_{(x_1, \ldots, x_j, x_{j+1})}$.

\end{proof}

\section{Classical Simulation of Clifford Plus T Circuits}\label{app:additive_CSim}

Below, we give an aggressively simplified account of the $\textsc{RawEstim}$ algorithm of Ref.~\cite{pashayan2022fast} that nevertheless, outlines the mathematical connection to $\probnum{110}$. This simulation algorithm starts with the established approach \cite{bravyi2016improved, howard2017application} of using gate teleportation \cite{zhou2000methodology} to replace the $T$-gates with $T$-states  ($\ket{T}=T\ket{+}$ state).  If the $n$-qubit target circuit $U$ originally contained $t$ occurrences of the $T$-gate, this gate teleportation procedure results in a Clifford circuit $C$ acting on $\ket{0}^{\otimes n} \otimes \ket{T}^{\otimes t}$. The Born rule probability $p$
of observing the $0^n$ outcome when the original circuit $U \ket{0^n}$ is measured in the computational basis,
can be expressed in terms of the Born rule probability $p'$ of observing the $0^{n+t}$ outcome when the new circuit is measured in the computational basis using $p=2^t p'$. The goal then becomes estimating
\begin{align}\label{eq:Born_prob_gadgetized_circuit}
	p'=\abs{\bra{0^{n+t}} C \ket{0^n T^{t}}}^2 = \abs{\braket{\psi}{T^{t}}}^2
\end{align}
to sufficiently high precision ($\pm 2^{-t}\epsilon$), where $\ket{\psi}= (\bra{0^n} \otimes I^{\otimes t}) C^{\dagger} \ket{0^{n+t}}$ is a (sub-normalized) stabilizer state and we have used the concise notation $\ket{0^n T^{t}}:=\ket{0}^{\otimes n} \otimes \ket{T}^{\otimes t}$. Ref.~\cite{pashayan2022fast} proceeds by decomposing the magic states $\ket{T^t}$ into a linear combination of stabilizer states, which can be simulated separately. The 1-norm of the stabilizer coefficients in this decomposition is the key driver of simulation runtime. The square of this 1-norm is minimized (achieving the so-called stabilizer extent $\xi$) when the magic states are decomposed into a linear combination of $\ket{+}$ and $\ket{+i}$ states as follows:
\begin{align}
	\ket{T^t} & =\sqrt{\xi} \sum_{\ket{\phi}\in \set{\ket{+},\ket{+i}}^{\otimes t}} \frac{e^{i\theta(\phi)}}{2^t} \ket{\phi}                                    \\
	          & =\sqrt{\xi} \underset{\ket{\phi}\sim \mathrm{Unif} \set{\ket{+},\ket{+i}}^{\otimes t}}{\mathbb{E}} \left[ e^{i\theta(\phi)} \ket{\phi} \right],
\end{align}
where $\theta(\phi)$ are some phases that will not be important here. The estimation of the inner product:
\begin{align}
	\braket{\psi}{T^{t}}=\sqrt{\xi} \underset{\ket{\phi}\sim \mathrm{Unif} \set{\ket{+},\ket{+i}}^{\otimes t}}{\mathbb{E}} \left[ e^{i\theta(\phi)} \braket{\psi}{\phi}\right]
\end{align}
is achieved by using the mean of $\sqrt{\xi}e^{i\theta(\phi)} \braket{\psi}{\phi}$ based on many iid samples of $\ket{\phi}$. Using the Hoeffding inequality, the number of required samples, $N$, can be upper bounded by the square of the modulus:
\begin{align}\label{eq:csim_num_samples_UB}
	N \leq \order{\epsilon^{-2} \underset{\ket{\phi} \in \set{\ket{+},\ket{+i}}^{\otimes t}}{\max} \abs{\sqrt{\xi} e^{i\theta(\phi)} \braket{\psi}{\phi}}^2}
	= \order{\xi \epsilon^{-2} \underset{\ket{\phi} \in \set{\ket{+},\ket{+i}}^{\otimes t}}{\max} \abs{\braket{\phi}{\psi}}^2}.
\end{align}
The RHS provides the direct link between $\probnum{110}$ and the performance of this simulator. In Ref.~\cite{pashayan2022fast}, it was noted that
\begin{align*}
	\underset{\ket{\phi} \in \set{\ket{+},\ket{+i}}^{\otimes t}}{\max} \abs{\braket{\phi}{\psi}}^2\leq 2^{-t}
\end{align*}
permitting a $\pm \epsilon$ precision estimator for $p$ using $\order{\xi \epsilon^{-2}}$ samples.
By a tighter bound for the RHS of \cref{eq:csim_num_samples_UB} the number of required samples and the simulation runtime can be improved by a factor of up to $\twonorm{\ket{\psi}}^2$.

\section{Low-rank matrix completion}\label{app:LR_Matrix_Completion}
\Cref{lem:NSPS_to_nullity} gives us the following interpretation of the $\probdec{220}$ problem over graph states: given a graph $G$, find an induced subgraph with maximal nullity. If we were to instead consider an arbitrary stabilizer input, then the tableau has the form $[C | D]$ for matrices satisfying the symplectic condition ($CD^T + DC^T = 0$). Of course, the setting is strictly more general (and therefore $\NP$-hard), but it is perhaps simpler to see that the $\probdec{220}$ problem seeks the lowest-rank matrix we can obtain from taking, for each qubit $i$, the $i$th column of either $C$ or $D$. In particular, an $X$-measurement on qubit $i$ corresponds to taking the $i$th column of $D$, and a $Z$-measurement on qubit $i$ corresponds to taking the $i$th column of $C$.

As it turns out, this fits into a large body of research surrounding rank minimization. In general, the input for such problems is a matrix filled with entries from some finite field (in our case, $\mathbb F_2$) as well as polynomially-many variables. The goal is then to find an assignment of the variables that minimizes the rank. Our $\NP$-hardness proof can be thought of as an alternative to other $\NP$-hardness results in this framework (most closely, the ``MinRank'' problem of \cite{BUSS1999Complexity}) with some additional nice properties: there is exactly one variable per column, and each variable cannot appear in more than one column.

It is also worth mentioning a broad area of low-rank matrix completion problems initiated by Haemers that was inspired by the Shannon capacity of graphs \cite{haemers1981upper}. This is a special case of the MinRank problem where the diagonal is fixed to 1, the off-diagonal entries at edges of an input graph are fixed to 0, and each remaining off-diagonal entry is assigned its own variable. It was first shown by Peeters \cite{peeters1996orthogonal} that this problem is $\NP$-hard (even over $\mathbb F_2$). There were many subsequent follow-up works showing hardness for different fields, different patterns of variables, approximations, and positive-semidefinite completions \cite{e2013complexity, hardt2014computational, shitov2016hard, chawin2023improved, chawin_et_al}. To our knowledge, there is no direct reduction from these problems to our own (or vice versa).

\section{Measures of entanglement}\label{app:Entanglement_and_mult_CSim}
Entanglement has proven to be an important yet elusive property of quantum systems. Substantial effort has been  expended in characterizing, quantifying and attributing operational significance to measures of entanglement.
Here we will discuss the \emph{relative entropy} $E_R (\rho)$, the \emph{geometric measure} $E_g (\ket{\psi})$ and the \emph{Schmidt measure} $E_{s} (\ket{\psi})$ of entanglement.
Evaluating these entanglement measures for general states has proven to be  computationally prohibitive. However, the rich multi-partite entanglement properties exhibited by stabilizer states has provided some fertile ground for this research.

Restricting to stabilizer states somewhat simplifies the landscape revealing several elegant and powerful results. For all pure stabilizer states, the relative entropy of entanglement and the geometric measure are equal \cite{hayashi2008entanglement, markham2007entanglement}.  For all graph states $\ket{G}$, Ref.~\cite{markham2007entanglement} showed the following connection between graph properties and entanglement:
\begin{align}\label{eq:geometric_measure_sandwich_ineq}
	\tau(G)\geq E_R(\ket{G})=E_g(\ket{G}).
\end{align}
The upper bound comes from the fact that measurement, in the Pauli-$Z$ basis, of the qubits associated with the minimum vertex cover of $G$ results in at most $\tau(G)$ non-deterministic outcomes contributing at most $\tau(G)$ factors of $1/2$ to the fidelity. Thereafter, the remaining qubits are a maximum independent set as all edges of $G$ have been removed by the previous measurement. Hence these qubits are in a Pauli-$X$ basis state $\set{\ket{+},\ket{-}}^{\otimes \alpha(G)}$ and can be measured in the Pauli-$X$ basis with deterministic outcome.
This inequalities can be tightened by searching over LC orbits of $G$.

The canonical polyadic (CP) rank\footnote{Also known as the \emph{tensor rank}.}, $R_{\min}$, of an $n$-qubit state $\ket{\psi}$ interpreted as an $n$-dimensional tensor quantifies the minimum number of terms needed to decompose $\ket{\psi}$ into a linear combination of product states. The Schmidt measure $E_s (\ket{\psi}):=\log_2 R_{\min}$ is a monotone under stochastic local operations and classical communication (SLOCC) \cite{eisert2001schmidt, Dur2000three_qubit_entanglement}.
The Schmidt measure for graphs states can also be upper bounded by the vertex cover number \cite{hein2004multiparty, hein2006entanglementgraphstatesapplications, markham2007entanglement}
\begin{align}\label{eq:Schmidt_measure_sandwich_ineq}
	\tau(G)\geq E_s(\ket{G}).
\end{align}

Our work associated with $\probnum{222}$ (and to a lesser extent $\probnum{220}$) is a parallel line of inquiry that fits into and complements the prior work. It can be seen as a further simplification of the entanglement measurement problem where we restrict the optimizations associated with the geometric and Schmidt measure from all product states to the discrete set of all stabilizer product states. Concretely, we can define and denote the \emph{stabilizer} product state equivalents of $E_g$ and $E_s$ by $E_{sg}$ and $E_{ss}$ respectively where for a stabilizer state $\ket{\psi}$, $E_{sg} (\ket{\psi}):= -\log_2 m({\psi},\pset{222})$ and $E_{ss}(\ket{\psi})$ is defined as the base-2 logarithm of the smallest number of terms required to decompose $\ket{\psi}$ into a linear combination of stabilizer product states.

We observe that \cref{eq:geometric_measure_sandwich_ineq} can be modified as follows:
\begin{align}
	\tau(G)\geq E_{sg}(\ket{G}) \geq E_g(\ket{G}).
\end{align}
The upper bound to $E_{sg}(\ket{G})$ is immediate from the proof of the upper bound to $E_g(\ket{G})$ in \cref{eq:geometric_measure_sandwich_ineq} since local Pauli measurements are used to derive the upper bound.
The lower bound can be observed by noting that $E_g$ is a relaxation of $E_{sg}$. Additionally, \cref{eq:Schmidt_measure_sandwich_ineq} can be modified as follows:
\begin{align}
	E_{sg}(\ket{G}) \geq E_{ss}(\ket{G}) \geq E_s(\ket{G}).
\end{align}
The lower bound can be observed by noting that $E_s$ is a relaxation of $E_{ss}$. The upper bound can be observed by using \cref{lem:orthogonal_PSS_decomp} stated and proven below. This establishes the relation $E_{sg}(\ket{G}) \geq E_{ss}(\ket{G})$ between the stabilizer product state counterparts that does not hold generally between the geometric and Schmidt measure and, to our knowledge, is unknown to hold even when inputs are restricted to graph states.

To prove the $\NP$-hardness of $\probdec{220}$ and $\probdec{222}$, we use a specific graph construction that maps an input graph $G$ into a new graph $H=H(G)$ (also known as the 3-subdivision of $G$ \cite{ghanbari2023securedominationnumberksubdivision}) where each edge in $G=(V,E)$ is divided into 3 edges by placing two new vertices on the edge. Using this construction, we show that $E_{sg}(\ket{H(G)})=\tau(G)+\abs{E}=\tau(H(G))$.
For graphs of the form $H(G)$ that also have easy to compute vertex cover numbers, this provides a means to computing $E_{sg}(\ket{H(G)})$. As a simple example, let us consider the $n$-vertex ring graph $G=R_n$ where $H(R_n)=R_{3n}$. Since $\tau(R_n)=\lceil n/2 \rceil$, our results immediately give $E_{sg}(\ket{R_{3n}})=\lceil n/2 \rceil+n$.
Even the inefficient computation of $\tau(G)$ requires vastly less computational resources \cite{Xiao_2017} compared to the, admittedly more general\footnote{These methods take as an input a full inefficient description of an $n$-qubit state making them applicable to general states rather than graph states, albeit at the expense of requiring not only exponential runtime but also exponential memory.}, modern numerical methods for bounding $E_g$ \cite{weinbrenner2026completehierarchiesgeometricmeasure}. Beyond the special case where the target graph state is of the form $\ket{H(G)}$, one may still employ the above results by either identifying an LC-equivalence between the target graph and one of the form $H(G)$ or alternatively resorting to bounds that can be computed by observing that the target state can, with a limited number of stabilizer measurements and CZ gates be transformed into the desirable form $\ket{H(G)}$.
From this perspective, our results provide a computational tool for computing and/or bounding $m({\psi},\pset{220})$, $m({\psi},\pset{222})$ and $E_g (\ket{\psi})$ for certain stabilizer states.

We now show that a solution to a $\ket{\psi}$ instance of  $\probproj{222}$ immediately provides an orthogonal stabilizer product state decomposition of $\ket{\psi}$ with exactly $1/m({\psi}, \pset{222})$ terms. This is a direct corollary of the next lemma.

\begin{lemma}\label{lem:orthogonal_PSS_decomp}
	Let $\ket{\psi}$ be an $n$-qubit stabilizer state with known stabilizers. Let $\ket{\phi}$ be an $n$-qubit stabilizer product state with $\alpha=\braket{\phi}{\psi}\neq 0$. Then, there is an orthogonal product state decomposition of $\ket{\psi}$:
	\begin{align}\label{eq:SPS_decomp}
		\ket{\psi}=\alpha \sum_{u \in \set{0,1}^{m}} (-1)^{u\cdot b} i^{u Q u^{T}} L \ket{u C},
	\end{align}
	where, $m=-\log_2 \abs{\alpha}^2$ and $b\in \bitstring{m}$, $Q \in \set{0,1}^{m \times m}$ and $C \in \set{0,1}^{m \times n}$ are all efficiently computable. Additionally,  $L$ is a local Clifford unitary that satisfies $L \ket{0^n}=\ket{\phi}$.
\end{lemma}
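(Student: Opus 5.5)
The plan is to reduce to the case $\ket{\phi}=\ket{0^n}$ and then apply the standard amplitude form of a stabilizer state.

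\textbf{Reduction.} Write $\ket{\phi}=L\ket{0^n}$, where $L$ is the tensor product of the single-qubit Cliffords sending $\ket{0}$ to each factor of $\ket{\phi}$. Set $\ket{\psi'}:=L^\dagger\ket{\psi}$. This is a stabilizer state whose generators are obtained by conjugating the known generators of $\ket{\psi}$ by $L^\dagger$, which is efficient. It satisfies $\braket{0^n}{\psi'}=\alpha\neq 0$. It then suffices to show
\[
\ket{\psi'}=\alpha\sum_{u\in\{0,1\}^m}(-1)^{u\cdot b}\, i^{uQu^T}\ket{uC}
\]
and apply $L$ to both sides. Orthogonality of the terms is preserved by the unitary $L$. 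The terms $\ket{uC}$ are distinct computational basis states once $C$ has full row rank, so they are orthogonal.

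\textbf{Structure of $\ket{\psi'}$.} Let $C$ be the $X$-check part of the tableau of $\ket{\psi'}$ after Gaussian elimination. It has some rank $r$, and I take the rows of $C$ to be an $r\times n$ full-rank basis of its row space. By the usual argument behind \cref{lem:200_UB}, the support of $\ket{\psi'}$ in the computational basis is an affine subspace $x_0+\mathrm{rowspace}(C)$ of size $2^r$. All nonzero amplitudes have equal modulus $2^{-r/2}$. Since $\braket{0^n}{\psi'}\neq0$, the vector $0^n$ lies in the support, so we may take $x_0=0$. The support is then exactly $\{uC:u\in\{0,1\}^r\}$, and $|\alpha|^2=2^{-r}$, giving $r=m$.

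\textbf{Phases.} It remains to show that the phase of the amplitude at $uC$, relative to the amplitude $\alpha$ at $0$, is $(-1)^{u\cdot b}i^{uQu^T}$. I would use the Dehaene--De Moor result that phases on the support of a stabilizer state form a quadratic form with an $i$-linear part. To keep the argument self-contained, the concrete route is as follows. Take the $m$ reduced generators $g_j=\pm i^{c_j}X^{C_j}Z^{z_j}$ whose $X$-parts are the rows $C_j$, together with the $n-m$ purely $Z$-type generators. The $Z$-type generators fix the support. Applying $g_j$ to the amplitude equation $\langle x|\psi'\rangle=\langle x|g_j|\psi'\rangle$ gives the recursion $a(x+C_j)=\varepsilon_j(x)\,a(x)$, where $\varepsilon_j(x)=\pm i^{c_j}(-1)^{z_j\cdot x}$.

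Unfolding this recursion along $u$ gives $a(uC)=\alpha\prod_j(\cdots)$. The cross terms $(-1)^{z_j\cdot C_k}$ for $j<k$, together with the $i^{c_j}$ factors, collect into $i^{uQu^T}$ for an integer (upper-triangular) matrix $Q$ computable from the tableau. The sign terms collect into $(-1)^{u\cdot b}$. Everything here is polynomial time.

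\textbf{Main obstacle.} I expect the main difficulty to be bookkeeping of the phases: checking that the product is path-independent, which holds because the generators commute, and that it has exactly the form $(-1)^{u\cdot b}i^{uQu^T}$ with $Q$ a $\{0,1\}$ matrix. I would absorb any even multiples of the exponent into $b$ via $i^{2u_j^2}=(-1)^{u_j}$, and reduce cross terms $2Q_{jk}u_ju_k$ into symmetric off-diagonal entries.
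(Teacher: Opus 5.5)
Your proof is correct. It shares the paper's first step: conjugating by $L^\dagger$ so that $\ket{\phi}=\ket{0^n}$. It then computes the amplitudes by a different mechanism.

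\textbf{The paper's route.} The paper never analyses the support separately and never propagates phases. It writes each stabilizer as $s(y)=(-1)^{y\cdot b}i^{yQy^T}X^{yC_x}Z^{yC_z}$ with $Q=C_xC_z^T$, and applies the projector identity $\ket{\psi}=\frac{1}{\alpha^*}\ketbra{\psi}{\psi}\ket{0^n}=\frac{2^{-n}}{\alpha^*}\sum_{y}s(y)\ket{0^n}$. Because the $Z$-parts act trivially on $\ket{0^n}$, every term is a phase times $\ket{yC_x}$. Splitting $y=(u,v)$, the purely $Z$-type generators contribute only the scalar $\sum_v(-1)^{(0^m,v)\cdot b}$, which is nonzero since $\alpha\neq 0$. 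Taking the overlap with $\ket{0^n}$ and normalising then gives the support, the uniform modulus, the quadratic phase and $m=-\log_2\abs{\alpha}^2$ in a single computation.

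\textbf{Your route.} You obtain the support and modulus from the structure behind \cref{lem:200_UB}. You then walk along the row space using the local rule $a(x+C_j)=\varepsilon_j(x)a(x)$. This is more elementary and shows exactly where each factor comes from. Your $\varepsilon_j$ is exactly right, using Hermiticity of the generators, $c_j\equiv C_j\cdot z_j \pmod 2$. Unfolding it reproduces the paper's $Q$: diagonal entries $c_j$ and off-diagonal entries $z_j\cdot C_k \bmod 2$.

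\textbf{Two remarks.}
First, the path-independence you flag as the main obstacle is not actually needed. The amplitude map $a$ is a well-defined function, so unfolding along one fixed ordering of the generators already computes its values, and no consistency check is required.
Second, keep your absorption of $i^{2u_j}=(-1)^{u_j}$ into $b$. The paper's proof leaves the diagonal of $Q$ defined modulo $4$, which does not literally yield $Q\in\set{0,1}^{m\times m}$ as the statement claims; your step closes that small gap.
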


The proof of this statement relies on using $L^{\dagger}$ to locally rotate $\ket{\psi}$ into a preferred basis then observing basic properties of the standard quadratic form expansion \cite{dehaene2003clifford, nest2008classical, bermejo2012classical, bravyi2016improved, Beaudrap_2022} of this state.

\begin{proof}
	Let $\ket{\phi'}$ be a stabilizer product state that achieves the fidelity $F=\alpha \alpha^*=\abs{\braket{\phi'}{\psi'}}^2$ with the target state $\ket{\psi'}$.
	W.L.O.G., let us use local Clifford rotations to simultaneously rotate $\ket{\phi'} \rightarrow \ket{\phi}$ and $\ket{\psi'} \rightarrow \ket{\psi}$ such that $\ket{\phi}$ becomes the computational basis state $\ket{\phi}=\ket{0^n}$.
	An orthogonal stabilizer product state decomposition of $\ket{\psi}$ with exactly $R$ terms can then be converted back to an orthogonal stabilizer product state decomposition of $\ket{\psi'}$ with exactly $R$ terms by reversing the local Clifford rotations.
	Additionally, we observe that $F=\abs{\braket{\phi'}{\psi'}}^2=\abs{\braket{0^n}{\psi}}^2$ so we aim to construct an $R$-term decomposition where $1/R=\abs{\braket{0^n}{\psi}}^2=F$ to prove the claim.

	Let us denote the $n$ stabilizer generators of $\ket{\psi}$ by $\set{g_j}_{j\in [n]}$. We write these generators using the notation $g_j=i^{2b_j+x(j)\cdot z(j)} X^{x(j)} Z^{z(j)}$ where for each $j$, $b_j\in \set{0,1}$ determines the sign of the generator, $x(j), z(j) \in \set{0,1}^n$ are row vectors and for a local Pauli $P$ and column vector $v\in \set{0,1}^n$, the notation $P^v$ represents the operator $P^{v_1}\otimes \ldots \otimes P^{v_n}$ with $P^0$ interpreted as the identity. Let $C_x$ and $C_z$ be the $n\times n$ matrices with $x(j)$ and $z(j)$ as their $j^{\mathrm{th}}$ row vectors respectively and let $C=[C_x|C_z]$ be the associated $n \times 2n$ matrix. Define $Q:=C_x C_z^T$ where the diagonal entries of $Q$ are computed modulo $4$ while the off-diagonal entries are computed over $\mathbb{F}_2$.
	The symmetry condition $Q=Q^T$ is imposed by the fact that the generators commute.  The stabilizers of $\ket{\psi}$ take the form:
	\begin{align}
		s(y) & =g_1^{y_1}\times \ldots \times g_n^{y_n}                \\
		     & =(-1)^{y\cdot b} \times i^{y Q y^T} X^{y C_x} Z^{y C_z}
	\end{align}

	for $y \in \bitstring{n}$ a row vector. Here, we compute the exponent $y Q y^T$ over the integers. Without loss of generality, we can assume that $C$ is in  reduced row echelon form and $m$ is the matrix rank of $C_x$. This implies that $[Q]_{i,j}=0$ whenever $\max \set{i,j}>m$. We now split $y=(u,v)\in \bitstring{m} \times \bitstring{n-m}$ and use $C'_x$ and $C'_z$ to denote submatrices made up of the top $m$ rows of $C_x$ and $C_z$ respectively. We use $Q'=C'_x (C'_z)^T$ to denote the leading principal submatrix of order $m$ of $Q$. Hence:
	\begin{align}
		s(u,v) & =g_1^{u_1}\times \ldots \times g_m^{u_m} \times g_{m+1}^{v_1}\times \ldots \times g_n^{v_{n-m}} \\
		       & =(-1)^{(u,v)\cdot b} \times i^{u Q' u^T} X^{y C_x} Z^{y C_z}                                    \\
		       & =(-1)^{(u,v)\cdot b} \times i^{u Q' u^T} X^{u C'_x} Z^{y C_z}.
	\end{align}
	That is, the quadratic form component of the phases and the Pauli-$X$ components of the operator do not depend on $v$.

	Since the overlap $\braket{0^n}{\psi}$ cannot be zero by assumption, we have:
	\begin{align*}
		\ket{\psi} & = \frac{1}{\alpha^{*}} \ketbra{\psi}{\psi} \ket{0^n}                                                                                                                             \\
		           & = \frac{1}{\alpha^{*}} 2^{-n} \sum_{y \in \bitstring{n}} s(y) \ket{0^n}                                                                                                          \\
		           & = \frac{1}{\alpha^{*}} 2^{-n}\left[ \sum_{v \in \bitstring{n-m}} (-1)^{(0^m,v)\cdot b} \right] \sum_{u \in \bitstring{m}} (-1)^{(u,0^{n-m})\cdot b} ~i^{u Q' u^T}  \ket{u C_x'}.
	\end{align*}
	The orthogonality of $\set{\ket{uC_x'}}_u$ is ensured since $C_x'$ has full rank.
	Taking the inner product with $\ket{0^n}$ gives:
	\begin{align}
		\alpha & =\frac{1}{\alpha^{*}} 2^{-n} \left[ \sum_{v \in \bitstring{n-m}} (-1)^{(0^m,v)\cdot b} \right].
	\end{align}
	Hence,
	\begin{align*}
		\ket{\psi} = \alpha \sum_{u \in \bitstring{m}} (-1)^{(u,0^{n-m})\cdot b} ~i^{u Q' u^T}  \ket{u C_x'}.
	\end{align*}
	The normalization $\braket{\psi}{\psi}=1$ and the orthogonality of $\set{\ket{uC_x'}}_u$ implies that $m=-\log_2 \abs{\alpha}^2$. Finally, we apply the local Clifford $L$ to rotate to the original state:
	\begin{align*}
		\ket{\psi'} & = L\ket{\psi}                                                                      \\
		            & =\alpha \sum_{u \in \bitstring{m}} (-1)^{u\cdot b'} ~i^{u Q' u^T}  L \ket{u C_x'},
	\end{align*}
	where we have used $b'$ to denote the first $m$ entries of the vector $b$. Relabeling the primed variables produces the claimed \cref{eq:SPS_decomp}.
\end{proof}

\section{Classification into equivalence classes}
\label{app:Classification}

We now classify all non-empty subsets $\sscase\subseteq \sall$ into orbits under conjugation by $U_{loc}\in \clif{1}$.
For $\sscase\subseteq \sall$, and $U_{loc}$ a single qubit Clifford, we define $U_{loc} \curvearrowright \sscase:=\set{U_{loc} \ketbra{\phi}{\phi}U_{loc}^{\dagger}~|~\ketbra{\phi}{\phi}\in \sscase}$. This group action defines an equivalence relation $\eqloc$ on subsets $\sscase$ such that for any $\sscase, \sscase' \subseteq \sall$, we have
$\sscase \eqloc \sscase'$ iff $\sscase'$ is contained in the orbit of $\sscase$:
\begin{align}
	\orb{\sscase}:=\set{U_{loc} \curvearrowright \sscase~|~U_{loc} \in \clif{1}}.
\end{align}
As we will shortly show, \cref{eq:S_representatives} provides the complete list of representatives for the orbits of all non-empty subsets of $\sall$.
As outlined in \cref{sec:problem_classification}, if two subsets $\sscase,\sscase'$ share an orbit, then they are related by the group action of some element $U_{loc}\in \clif{1}$. As per \cref{eq:symmetry_optimization_fn}, this implies the existence of a symmetry $U=U_{loc}^{\otimes n}$ in the optimization function which establishes a poly time reduction between $\probdec{\stab{n},\sscase}$ and $\probdec{\stab{n},\sscase'}$. Hence, we will show that for any non-empty $\sscase'\subseteq \sall$, the complexity of $\probdec{\stab{n}, \sscase'}$ is (polynomial-time) equivalent to the complexity of $\probdec{\stab{n}, \sscase}$ for a unique $\sscase\in \reps$.

\begin{figure}[h]
	\centering
	\begin{tikzpicture}[
			state/.style={inner sep=4pt, font=\large}
		]

		\node[state] (0) at (0, 0) {$\ket{0}$};
		\node[state] (+) at (2, 0) {$\ket{+}$};
		\node[state] (-) at (5, 0) {$\ket{-}$};
		\node[state] (1) at (7, 0) {$\ket{1}$};

		\node[state] (i)  at (3.5,  1.5) {$\ket{+i}$};
		\node[state] (-i) at (3.5, -1.5) {$\ket{-i}$};

		\draw[<->, thick] (0) -- node[above] {$H$} (+);
		\draw[<->, thick] (-) -- node[above] {$H$} (1);
		\draw[<->, thick] (i) -- node[right] {$H$} (-i);

		\draw[->, thick] (+) -- node[above left] {${\phasegate}$} (i);
		\draw[->, thick] (i) -- node[above right] {${\phasegate}$} (-);
		\draw[->, thick] (-) -- node[below right] {${\phasegate}$} (-i);
		\draw[->, thick] (-i) -- node[below left] {${\phasegate}$} (+);

		\draw[->, thick] (0) to [out=135, in=225, looseness=5] node[left] {${\phasegate}$} (0);
		\draw[->, thick] (1) to [out=45, in=315, looseness=5] node[right] {${\phasegate}$} (1);

	\end{tikzpicture}
	\caption{The six single-qubit stabilizer states. The arrows indicate the transitions under the Hadamard ($H$) and phase (${\phasegate}$) gates modulo a global phase.}
	\label{fig:stabilizer_update_rules}
\end{figure}

\begin{lemma}
	The set of all non-empty subsets of $\sall$ can be partitioned into nine equivalence classes with representatives $\reps$.
\end{lemma}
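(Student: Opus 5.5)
The plan is to identify the action of $\clif{1}$ on the six single-qubit stabilizer projectors with the action of the rotation group of the octahedron on its six vertices. The octahedron here has vertices $\pm X,\pm Y,\pm Z$ on the Bloch sphere. Conjugation by a single-qubit Clifford permutes the three Pauli axes, possibly flipping signs. The image of $\clif{1}$ (modulo phases) is the full rotation group of the octahedron, which is isomorphic to $S_4$ and has order $24$. I will justify this using \cref{fig:stabilizer_update_rules}: $H$ and $\phasegate$ generate the action, and one checks they realize a transitive action on the vertices whose stabilizer of $\ket{0}$ contains the $4$-cycle $\phasegate$ on the equator. Orbit--stabilizer then gives order $6\cdot 4=24$. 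A key structural fact is that the action preserves antipodality: it maps pairs $\{\ketbra{\phi}{\phi},\ketbra{\phi^\perp}{\phi^\perp}\}$ (eigenprojectors of the same Pauli) to such pairs.

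First step: show that the triple $(a,b,c)$, sorted in decreasing order, is an invariant of the orbit. Here $a,b,c$ count how many eigenprojectors of each Pauli lie in $\sscase$. Since conjugation permutes the three antipodal pairs, it permutes the multiset of counts, so the sorted triple is invariant. The sorted triples with entries in $\{0,1,2\}$, not all zero, number $\binom{5}{3}-1=9$. These are exactly the labels of $\reps$, and each representative in \cref{eq:S_representatives} visibly has the stated counts. Hence the nine classes are pairwise distinct.

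Second step: show that any two subsets with the same sorted triple lie in the same orbit. This is the main (though routine) obstacle. I would first use axis permutations, which the full rotation group realizes (e.g.\ $H$ swaps $X,Z$ and $\phasegate$ swaps $X,Y$ up to signs), to arrange the counts as $(a,b,c)$ on $(Z,X,Y)$. I would then fix signs within axes that have count $1$. Rotations by $\pi$ about a coordinate axis (the Paulis $X,Y,Z$ themselves) flip the signs of the other two axes simultaneously. Combined with the other sign-changing rotations (e.g.\ $\pi$-rotations about face-diagonal axes like $H$, which swap two axes and negate the third), these suffice. The group of sign changes realized inside the rotation group, together with permutations, acts on sign patterns of the count-$1$ axes. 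One must check case by case that all patterns for $110,210,111,211,221$ are reachable while keeping fixed the axes with counts $2$ or $0$, since a full pair is invariant under any sign flip.

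The only delicate case is $111$, the three unit vectors $\{s_1 Z, s_2 X, s_3 Y\}$. Here Pauli conjugations flip an even number of signs, so they only reach sign patterns of one parity. To reach the others I would use a permutation-with-sign element: e.g.\ $\phasegate$ maps $X\to Y$, $Y\to -X$, and composed with an axis permutation it gives an odd sign change on the set (not on the ordered axes). Since $\{s_1Z,s_2X,s_3Y\}$ is an unordered set, an octant corner, and $S_4$ acts transitively on the $8$ faces (orbit of size $8$, stabilizer of order $3$), all eight sign patterns lie in one orbit. This face-transitivity argument gives $111$ cleanly, and analogous vertex/edge counting handles the other count-$1$ cases: $110$ corresponds to edges (12 of them, transitive), and the complements of $110$ and $111$ give $221$ and $211$. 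Complementation commutes with the action, so $\sall\setminus\sscase$ classes follow from those of $\sscase$. This reduces the work to $100,200,110,111,210,220$, where transitivity on vertices, antipodal pairs, edges and faces is standard, and $210$ is an antipodal pair plus a vertex off its axis, handled by the stabilizer of a pair acting transitively on the remaining four vertices.
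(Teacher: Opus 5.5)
Your argument is correct, but it takes a different route from the paper's.

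\textbf{Invariant.} The paper separates orbits by cardinality together with $N_0$, the number of ordered orthogonal (antipodal) pairs. It uses $N_0$ only for sizes $2$ and $3$, and handles sizes $4$ and $5$ by complementation. You instead use the sorted count triple $(a,b,c)$ as a single invariant. It separates all nine classes at once and matches the labeling $\pset{abc}$ directly; it carries the same information as the pair $(N_1,N_0)$.

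\textbf{Transitivity.} The paper works by explicit bookkeeping. It tabulates all subsets of sizes $2$ and $3$ and exhibits concrete conjugations ($H$, $\phasegate$, $ZH$, $\phasegate^{\dagger}$, Paulis) linking them. You replace this by identifying the action with the octahedral rotation group $S_4$. You then use its standard transitivity on vertices, axes, edges and faces, plus the order-$8$ stabilizer of an axis acting on the equatorial square for $\pset{210}$. Orbit--stabilizer gives $|G|\ge 24$, and containment in the rotation group gives the upper bound.

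\textbf{Trade-off.} Your approach is shorter and more conceptual, and it explains the orbit sizes ($6,3,12,8,12$) structurally. The paper's approach is more elementary and self-contained: every identification is witnessed by a specific gate, and exhaustiveness is cross-checked by counting ($15=3+12$, $20=12+8$). Your middle discussion of Pauli sign flips is unnecessary once you invoke face-transitivity for $\pset{111}$.

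\textbf{One small slip.} The complement of a $\pset{111}$-type set is again of type $\pset{111}$, and $\pset{221}$ is the complement class of $\pset{100}$, not of $\pset{111}$. Since you treat $\pset{100}$ directly, this causes no gap.
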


\begin{proof}
	The total number of non-empty subsets $\sscase\subseteq \sall$ is $2^6-1=63$. These can be partitioned into $63=\binom{6}{1} +\ldots+\binom{6}{6}$ by cardinality, since equivalent subsets necessarily have the same cardinality.  We claim that:
	\begin{enumerate}
		\item all subsets of cardinality 1 are equivalent to $\pset{100}$
		\item all subsets of cardinality 2 are equivalent to either $\pset{200}$ or $\pset{110}$
		\item all subsets of cardinality 3 are equivalent to either $\pset{210}$ or $\pset{111}$
		\item all subsets of cardinality 4 are equivalent to either $\pset{220}$ or $\pset{211}$
		\item all subsets of cardinality 5 are equivalent to $\pset{221}$
		\item all subsets of cardinality 6 are equivalent to $\pset{222}$.
	\end{enumerate}

	Clearly, cardinality is preserved\footnote{Subsets of different cardinality must be in separate orbits since $U_{loc}$ is invertible.} under the defined group action, hence each of the cardinalities is associated with at least one orbit. We will make use of another invariance that generalizes this observation.
	We note that the set of pairwise Hilbert--Schmidt inner products in any subset $\sscase\neq \emptyset$ is invariant under the defined group action. For clarity, let us define the function:
	\begin{align}
		N:\set{0,1/2,1} \times \mathcal{P}(\sall) & \rightarrow \bbn                                                              \\
		(v,\sscase)                               & \mapsto N_v(\sscase):=\abs{\set{(a,b)\in \sscase\times \sscase~|~\tr{ab}=v}},
	\end{align}
	where, $\mathcal{P}(\sall)$ denotes the power set of $\sall$.
	Since equivalent subsets are related by pairwise conjugation by a fixed unitary, if $\sscase \eqloc \sscase'$ then for all $v\in \set{0,1/2,1}$, we must have $N_v(\sscase)=N_v(\sscase')$. We note that $N_1(\sscase)=\abs{\sscase}$ which gives us the invariance of cardinality. Shortly we will also use $N_0(\cdot)$ to distinguish orbits with the same cardinality.

	We now prove claims 6, 1, 2 and 3 in that order. Claim 6 is trivial as $\pset{222}$ is the only subset of size 6.
	Using \cref{fig:stabilizer_update_rules}, one can easily confirm that any element of $\sall$ can be mapped to any other by a sequence of conjugations by Hadamard and phase gates. Hence every $\sscase\subseteq \sall$ with $\abs{\sscase}=1$ is in  $\orb{\pset{100}}$.

	For each of claims 2 and 3, we observe that there are (at least) two distinct orbits. Since:
	\begin{itemize}
		\item $N_0(\pset{200})=2\neq 0=N_0(\pset{110})$, i.e., subsets of cardinality 2 have at least 2 distinct orbits,
		\item $N_0(\pset{210})=2\neq 0=N_0(\pset{111})$, i.e., subsets of cardinality 3 have at least 2 distinct orbits.
	\end{itemize}

	We deal with each of the four orbits (associated with $\pset{200},\pset{110},\pset{210},\pset{111}$) separately showing that the union of orbits $\pset{200},\pset{110}$ encompass all subsets of cardinality 2 and the union of orbits $\pset{210},\pset{111}$ encompass all subsets of cardinality 3.

	The two element set $\pset{200}$ contains an orthogonal pair of projectors. It is easy to verify that there are 3 such choices; these are associated with the $\pm 1$ eigenprojections for the 3 Paulis $X, Y$ and $Z$. We represent each of these subsets with the notation $(\pm,0,0), (0,\pm,0)$ and $(0,0,\pm)$ respectively. Since ${\phasegate} \curvearrowright (\pm,0,0)=(0,\pm,0)$ and $H\curvearrowright (\pm,0,0)=(0,0,\pm)$, these 3 sets are connected by a group action and contain $\pset{200}$. Hence they are equivalent and represented by $\pset{200}$.

	The two element set $\pset{110}$ does not contain an orthogonal pair of projectors. We now claim that there are $3\times 2\times 2=12$ such choices associated with choosing 2 of 3 Paulis from $\set{X, Y, Z}$ and for each of the chosen Paulis, choosing one of two eigenprojections. We represent these in \cref{tbl:orbits110}.
	To show that all of these are connected under the group action of specific elements from $\clif{1}$, we show that the three elements in the top row of \cref{tbl:orbits110} are connected and that for any column, the top row element is connected to all other row elements. Observe that $ZH$ maps $(+,+,0)$ to $(0,+,+)$ and ${\phasegate}^{\dagger}$ maps $(0,+,+)$ to $(+,0,+)$. Additionally, the three Paulis each move the eigenprojection choice ``+1 and +1'' to each of the other choices in the same column. Hence, all 12 elements are contained in $\orb{\pset{110}}$. This accounts for all size 2 subsets as there are $\binom{6}{2}=15=3+12$.

	\begin{table}[H]
		\centering
		\caption{Table associated with subsets in $\orb{\pset{110}}$}\label{tbl:orbits110}
		\begin{tabular}{|c|c|c|c|}
			\hline
			\textbf{Eigenprojection choices} & \textbf{Pauli X and Y} & \textbf{Pauli X and Z} & \textbf{Pauli Y and Z} \\ \hline
			$+1$ and $+1$                    & $(+,+,0)$              & $(+,0,+)$              & $(0,+,+)$              \\ \hline
			$+1$ and $-1$                    & $(+,-,0)$              & $(+,0,-)$              & $(0,+,-)$              \\ \hline
			$-1$ and $+1$                    & $(-,+,0)$              & $(-,0,+)$              & $(0,-,+)$              \\ \hline
			$-1$ and $-1$                    & $(-,-,0)$              & $(-,0,-)$              & $(0,-,-)$              \\ \hline
		\end{tabular}
	\end{table}

	The three element set $\pset{210}$ contains an orthogonal pair of projectors. We now claim that there are $3\times 2 \times 2=12$ such choices associated with choosing 1 of 3 Paulis from $\set{X, Y, Z}$ that will contribute both eigenprojectors, then choosing 1 of the remaining 2 Paulis and choosing one of its two eigenprojections. We represent each of these subsets in \cref{tbl:orbits210}.
	To show that all of these are connected under the group action of specific elements from $\clif{1}$, we first consider how the phase gate connects these subsets. Specifically, we consider the ${\phasegate}$-orbits of elements in \cref{tbl:orbits210}:
	\begin{align}
		\orb{\sscase}_{\phasegate}:= \set{{\phasegate}^p \curvearrowright \sscase~|~p\in \bbz_4}.
	\end{align}
	Since the phase gate acts trivially on the Pauli $Z$ eigenprojectors, the number and sign of $Z$ eigenprojectors remains invariant. The phase gate acts as a 4-cycle permuting the entries in the rightmost  column of \cref{tbl:orbits210}; let us call these the $\pm Z$ ${\phasegate}$-orbit. Another 4-cycle ${\phasegate}$-orbit is given by $\set{(\pm,+,0), (-,\pm,0), (\pm,-,0), (+,\pm,0)}$ which we call the empty-Z ${\phasegate}$-orbit. Additionally, two 2-cycle ${\phasegate}$-orbits are given by $\set{(\pm,0,+), (0,\pm,+)}$ and $\set{(\pm,0,-), (0,\pm,-)}$ which we call the $+Z$ and the $-Z$ ${\phasegate}$-orbits respectively. We now note that the action of the Hadamard gate connects each of the $+Z$, $-Z$ and empty-$Z$ ${\phasegate}$-orbits to the $\pm Z$ ${\phasegate}$-orbit. Specifically,
	$H \curvearrowright (\pm,0,+)=(+,0,\pm)$,
	$H \curvearrowright (\pm,0,-)=(-,0,\pm)$ and
	$H \curvearrowright (\pm,-,0)=(0,+,\pm)$. Hence all 12 subsets in \cref{tbl:orbits210} lie in $\orb{\pset{210}}$.

	\begin{table}[H]
		\centering
		\caption{Table associated with subsets in $\orb{\pset{210}}$. The columns are associated with the choice of Pauli $P\in \set{X, Y, Z}$ that contributes both eigenprojectors. The rows, are associated with the choice of the final projector. This is determined by choosing a Pauli $Q\in \set{X,Y,Z}\setminus \set{P}$ and the sign associated with its eigenprojector. We label $Q$ relative to $P$ using the cyclic Pauli order $XYZ$ so for example, if $P=X$ then the next Pauli is $Y$ and the previous Pauli is $Z$.}\label{tbl:orbits210}
		\begin{tabular}{|l|c|c|c|}
			\hline
			\textbf{eigenprojection choices} & \textbf{Both of Pauli X} & \textbf{Both of Pauli Y} & \textbf{Both of Pauli Z} \\ \hline
			Next Pauli and $+1$              & $(\pm,+,0)$              & $(0,\pm,+)$              & $(+,0,\pm)$              \\ \hline
			Next Pauli and $-1$              & $(\pm,-,0)$              & $(0,\pm,-)$              & $(-,0,\pm)$              \\ \hline
			Previous Pauli and $+1$          & $(\pm,0,+)$              & $(+,\pm,0)$              & $(0,+,\pm)$              \\ \hline
			Previous Pauli and $-1$          & $(\pm,0,-)$              & $(-,\pm,0)$              & $(0,-,\pm)$              \\ \hline
		\end{tabular}
	\end{table}

	The three element set $\pset{111}$ does not contain any orthogonal pair of projectors. We now claim that there are $2\times 2 \times 2=8$ such choices associated with choosing 1 of 2 eigenprojectors for each of the 3 Paulis $\set{X, Y, Z}$. Each of these can be represented as one of the strings $(+,+,+),(+,+,-),(+,-,+),(+,-,-),(-,+,+),(-,+,-),(-,-,+),(-,-,-)$. The action of phase gates breaks these up into two 4-cycles. Recycling our naming convention, we call these the $+Z$ and $-Z$ ${\phasegate}$-orbits. We observe that a Pauli $X$ connects these ${\phasegate}$-orbits. Hence all 8 subsets lie in $\orb{\pset{111}}$. This accounts for all size 3 subsets as there are $\binom{6}{3}=20=12+8$.

	Only claims 4 and 5 remain to be proven. We prove these by reducing them to the already proven claims 2 and 1 respectively. For this, we use the set complement function:
	\begin{align*}
		\notset{}: \mathcal{P}(\sall) & \rightarrow \mathcal{P}(\sall)                     \\
		\sscase                       & \mapsto \notset{\sscase}:=\sall \setminus \sscase.
	\end{align*}
	We claim that this bijection has the property that for all $\sscase, \sscase' \subseteq \sall$, $\sscase \eqloc \sscase' \iff \notset{\sscase} \eqloc \notset{\sscase'}$.
	This implies that for all $\sscase \subseteq \sall$, the set of entry-wise complements of $\orb{\sscase}$ form an orbit $\orb{\notset{\sscase}}$. In particular:
	\begin{enumerate}
		\item Since $\notset{\pset{221}}=(0,0,-)\in \orb{\pset{100}}$; this implies that $\orb{\pset{221}}$ is the entry-wise complement of $\orb{\pset{100}}$.
		\item Since $\notset{\pset{220}}=(0,\pm,0)\in \orb{\pset{200}}$; this implies that $\orb{\pset{220}}$ is the entry-wise complement of $\orb{\pset{200}}$.
		\item Since $\notset{\pset{211}}=(0,-,-)\in \orb{\pset{110}}$; this implies that $\orb{\pset{211}}$ is the entry-wise complement of $\orb{\pset{110}}$.
	\end{enumerate}

	It only remains to prove the claimed property of the complement map. Suppose $\sscase \eqloc \sscase'$, this means that there exists $U_{loc}\in \clif{1}$ such that $U_{loc} \curvearrowright \sscase=\sscase'$. Now, we observe that when $U_{loc}$ acts by conjugation on the elements of $\sall$,  it permutes them. Since $\sall$ is the disjoint union of $\sscase$ and $\notset{\sscase}$, this establishes that $U_{loc} \curvearrowright \sscase =\notset{(U_{loc} \curvearrowright \notset{\sscase})}$ as per the commutative diagram \cref{fig:commutative_diagram}. Hence, $U_{loc} \curvearrowright \sscase=\sscase' \implies U_{loc} \curvearrowright \notset{\sscase}=\notset{\sscase'}$. Applying the same argument in reverse using $U_{loc}^{\dagger}$ proves the reverse implication.

	\begin{figure}[h]
		\centering
		\begin{tikzpicture}[
				state/.style={inner sep=4pt, font=\large}
			]

			\node[state] (a) at (0, 0) {$\sscase$};
			\node[state] (b) at (2, 0) {$\notset{\sscase}$};
			\node[state] (a') at (0, -2) {$\sscase'$};
			\node[state] (b') at (2, -2) {$\notset{\sscase'}$};

			\draw[<->, thick] (a) -- node[above] {$\notset{}$} (b);
			\draw[<->, thick] (a') -- node[above] {$\notset{}$} (b');

			\draw[->, thick] (a) -- node[left] {$U_{loc}$} (a');
			\draw[->, thick] (b) -- node[right] {$U_{loc}$} (b');

		\end{tikzpicture}
		\caption{The commutative diagram for the complement map and entry-wise conjugation by any fixed $U_{loc}$. Since the top row and the bottom row are both disjoint unions of $\sall$ and entry-wise conjugation by any fixed $U_{loc}$ permutes the entries of $\sall$, it must hold that $\notset{(U_{loc} \curvearrowright \sscase)} ={U_{loc} \curvearrowright \notset{\sscase}}$.}
		\label{fig:commutative_diagram}
	\end{figure}
\end{proof}

\end{document}